\documentclass[mnsc,nonblindrev]{informs3_SSRN}

\OneAndAHalfSpacedXI % current default line spacing
\usepackage{natbib}
\def\bibfont{\small}%
\TheoremsNumberedThrough     % Preferred (Theorem 1, Lemma 1, Theorem 2)
\ECRepeatTheorems

\EquationsNumberedThrough    % Default: (1), (2), ...
\MANUSCRIPTNO{MS-0000-0000.00}

\usepackage{color}
\usepackage{multimedia}
\usepackage{multicol}
\usepackage{bbm}
\usepackage{graphics}
\usepackage{graphicx}
\usepackage{booktabs}
\usepackage{romannum}

\usepackage{grffile}
\usepackage{float}
\usepackage{url}
\usepackage{mathabx}
\usepackage{supertabular}
\usepackage{rotating}
\usepackage{pdflscape}
\usepackage{verbatim}
\usepackage{listings}
\usepackage{xcolor}
\usepackage{multicol}
\usepackage[shortlabels]{enumitem}
\usepackage{caption,subcaption}
\usepackage{epstopdf}
\usepackage{algorithm}
\usepackage{algorithmic}
\usepackage{enumitem}
\usepackage{pifont}

\newcommand{\dd}{\mathrm{d}}
\newcommand{\one}{\mathbbm{1}}
\newcommand{\E}{\mathbb{E}}

\newcommand{\R}{{\mathbb  R}}

\newcommand{\f}{\mathcal{F}}

\newcommand{\bm}[1]{\boldsymbol{#1}}

\graphicspath{{Figures/}}

\usepackage[normalem]{ulem}
\begin{document}
	%%%%%%%%%%%%%%%%
	\pagenumbering{arabic}
	% Outcomment only when entries are known. Otherwise leave as is and
	%   default values will be used.
	%\setcounter{page}{1}
	%\VOLUME{00}%
	%\NO{0}%
	%\MONTH{Xxxxx}% (month or a similar seasonal id)
	%\YEAR{0000}% e.g., 2005
	%\FIRSTPAGE{000}%
	%\LASTPAGE{000}%
	%\SHORTYEAR{00}% shortened year (two-digit)
	%\ISSUE{0000} %
	%\LONGFIRSTPAGE{0001} %
	%\DOI{10.1287/xxxx.0000.0000}%
	
	% Author's names for the running heads
	% Sample depending on the number of authors;
	% \RUNAUTHOR{Jones}
	% \RUNAUTHOR{Jones and Wilson}
	\RUNAUTHOR{Dai et al.}
	% \RUNAUTHOR{Jones et al.} % for four or more authors
	% Enter authors following the given pattern:
	%\RUNAUTHOR{}
	
	% Title or shortened title suitable for running heads. Sample:
	% \RUNTITLE{Bundling Information Goods of Decreasing Value}
	% Enter the (shortened) title:
	\RUNTITLE{Merton's Problem with Recursive Perturbed Utility}
	
	% Full title. Sample:
	% \TITLE{Bundling Information Goods of Decreasing Value}
	% Enter the full title:
	\TITLE{Merton's Problem with Recursive Perturbed Utility}
	
	% Block of authors and their affiliations starts here:
	% NOTE: Authors with same affiliation, if the order of authors allows,
	%   should be entered in ONE field, separated by a comma.
	%   \EMAIL field can be repeated if more than one author
	\ARTICLEAUTHORS{%
		\AUTHOR{Min Dai}
		\AFF{Department of Applied Mathematics and School of Accounting and Finance, The Hong Kong Polytechnic University, Hung Hom, Kowloon, Hong Kong. \EMAIL{mindai@polyu.edu.hk}} %, \URL{}}
	\AUTHOR{Yuchao Dong}
	\AFF{Key Laboratory of Intelligent Computing and Applications (Tongji University), Ministry of Education and School of Mathematical Sciences, Tongji University, Shanghai 200092, China. \EMAIL{ycdong@tongji.edu.cn}}
	\AUTHOR{Yanwei Jia}
	\AFF{Department of Systems Engineering and Engineering Management, The Chinese University of Hong Kong, Shatin, New Territories, Hong Kong. \EMAIL{yanweijia@cuhk.edu.hk}}
	\AUTHOR{Xun Yu Zhou}
	\AFF{Department of Industrial Engineering and Operations Research \& Data Science Institute, Columbia University, New York, NY 10027, USA. \EMAIL{xz2574@columbia.edu}}
	% Enter all authors
} % end of the block

\ABSTRACT{%
The classical Merton investment problem predicts deterministic, state-dependent portfolio {\it rules}; however, laboratory and field evidence suggests that individuals often prefer randomized decisions leading to stochastic and noisy choices. \citet{fudenberg2015stochastic} develop the additive perturbed utility theory to explain the preference for randomization in the static setting, which, however, becomes ill-posed or intractable in the dynamic setting.
We introduce the {\it recursive} perturbed utility (RPU), a special stochastic differential utility that incorporates an entropy-based preference for randomization into a recursive aggregator. RPU endogenizes the intertemporal trade-off between utilities from randomization and bequest via a discounting term dependent on past accumulated randomization, thereby avoiding excessive randomization and yielding a well-posed problem. In a general Markovian incomplete market with CRRA preferences, we prove that the RPU-optimal portfolio policy (in terms of the risk exposure ratio) is Gaussian and can be expressed in closed form, independent of wealth.  Its variance is inversely proportional to risk aversion and stock volatility, while its mean is based on the solution to a partial differential equation. Moreover, the mean is the sum of a myopic term and an intertemporal hedging term (against market incompleteness) that intertwines with policy randomization.
Finally, we carry out an asymptotic expansion in terms of the perturbed  utility weight to show  that the optimal mean policy deviates from the classical Merton policy at first order, while the associated relative wealth loss is of a higher order, quantifying  the financial cost of the preference for randomization.
}%

% Sample
%\KEYWORDS{deterministic inventory theory; infinite linear programming duality;
	%  existence of optimal policies; semi-Markov decision process; cyclic schedule}

% Fill in data. If unknown, outcomment the field
\KEYWORDS{Merton's problem; preference for randomization; recursive perturbed utility; biased stochastic policy}
%\HISTORY{This paper was first submitted on 8 August 2020.}

\maketitle
%%%%%%%%%%%%%%%%%%%%%%%%%%%%%%%%%%%%%%%%%%%%%%%%%%%%%%%%%%%%%%%%%%%%%%

% Samples of sectioning (and lab\textbf{}eling) in MNSC
% NOTE: (1) \section and \subsection do NOT end with a period
%       (2) \subsubsection and lower need end punctuation
%       (3) capitalization is as shown (title style).
%
%\section{Introduction.}\label{intro} %%1.
%\subsection{Duality and the Classical EOQ Problem.}\label{class-EOQ} %% 1.1.
%\subsection{Outline.}\label{outline1} %% 1.2.
%\subsubsection{Cyclic Schedules for the General Deterministic SMDP.}
%  \label{cyclic-schedules} %% 1.2.1
%\section{Problem Description.}\label{problemdescription} %% 2.

% Text of your paper here
\section{Introduction}
\label{intro}

Merton’s continuous-time investment problem \citep{merton1969lifetime} is a cornerstone of intertemporal portfolio choice. It provides a tractable framework for characterizing optimal investment decisions and has been extended to a variety of preference specifications with rich economic implications. A common conclusion arising from these models is that a rational, risk-averse investor follows a {\it deterministic} function of time and state, also called a portfolio policy or rule.\footnote{Here ``state" includes {\it observable} variables including  portfolio worth, known factors and stock prices.}  Thus, one can predict investor behavior deterministically based on these models. However, experimental and empirical evidence suggests that people sometimes favor or even crave ``randomized" and noisy choices, with examples ranging from sushi omakase\footnote{``Omakase", meaning in Japanese ``to entrust", originally refers to a set sushi course whose contents are completely determined by the chef based on his/her observation of an individual customer.  Therefore, the surprise from the revelation of each sushi dish is a major fun part of the dining experience. Nowadays, omakase meals popular in many countries are often pre-announced set menus, thereby losing their original feature (and excitement) of randomization.}  and ``blind boxes"\footnote{Also called ``mystery boxes", blind boxes are curated, randomized packages sold for a fixed price, containing undisclosed items ranging from electronics and collectibles to clothing and snack boxes. The ultra-popular Labubu figures produced by Pop Mart are such an example.}  to portfolio allocations. In other words, the same individual does not necessarily repeat the same choice when faced with the same problem. This leads to {\it stochastic} choice and randomized policies. \citet{fudenberg2015stochastic} propose and develop the \emph{additive perturbed utility} (APU) for \emph{static} choice problems, which is expected utility plus a perturbation over choice probabilities. This perturbation captures an inherent desire to randomize, consistent also with the behavioral literature \citep{hey1995stochastic,dwenger2018flipping,miao2018probabilistic,permana2020people,agranov2017stochastic,agranov2025ranges}.
The most widely used perturbation is the \emph{entropy} function, which produces the logit/Luce stochastic choice when the choice set is finite \citep{luce1959individual,anderson1992discrete}.\footnote{For finite choice sets, APU connects to discrete-choice and random-utility models \citep{mcfadden1974measurement,mcfadden2001economic,anderson1992discrete,berry1994estimating,machina1985stochastic,mattsson2002probabilistic,feng2017relation} and has been extended to multi-period settings \citep{hotz1993conditional,hotz1994simulation}.}

%Motivated by this discrepancy, we introduce the notion of \emph{recursive perturbed utility} (RPU). We show that even under classical constant-relative-risk-aversion (CRRA) preferences, maximizing RPU in a general incomplete market yields an \emph{optimal stochastic} portfolio policy. We further contrast its implications for portfolio choice with the standard Merton benchmark without perturbations.

In this paper, we
 %We adopt the entropy perturbation and embed it in a dynamic portfolio problem where the control is continuum-valued. In our interpretation, randomizing the portfolio does not generate direct monetary payoffs; rather, the perturbation term is a non-monetary component of preferences that coexists with the CRRA bequest utility over terminal wealth. Because portfolio choices are made continuously over time, the agent accrues a \emph{flow} of utility from randomization---analogous to the flow utility from consumption in the classical Merton consumption–investment problem. To maintain this analogy clearly, we abstract from consumption and focus solely on portfolio choice.
extend APU to the {\it dynamic} setting inherent for Merton's problem. %, with entropy perturbation. %, analogous to the additive flow utility from consumption.
Now that  the control is continuum-valued, this extension from static to dynamic is by no means straightforward. Indeed, randomizing portfolios does not generate direct monetary payoffs; rather, the perturbation term is a non-monetary component of the preference in addition to the bequest utility (of the terminal wealth). Because portfolio choices are made continuously over time, the agent accrues a \emph{flow} of utility from randomization, analogous to the flow utility from consumption in the classical Merton consumption-investment problem.\footnote{In this paper we do not consider consumption so as to stay focused on the randomization utility flow. }
With this in mind, our first result shows that the dynamic counterpart of APU in the Merton setting  is {\it ill-posed} when risk aversion is  not sufficiently strong: the agent can drive the payoff to infinity by increasing the choice variance without bound. Two forces, specific to the dynamic setting, are behind this pathology. First, with a continuous control, the entropy-based randomization can deliver unbounded perturbation payoffs, unlike the bounded entropy in finite choice sets. Second, the intertemporal trade-off between randomization and investment fundamentally differs from the familiar trade-off between consumption and investment. Consumption sacrifices future wealth growth; randomization, by contrast, leaves the \emph{expected} growth rate unchanged while increasing volatility. Risk aversion, if too low, is therefore insufficient to deter excessive randomization. Moreover, even when risk aversion is sufficiently large to restore well-posedness, the additive model fails to yield tractable solutions—even in the Black-Scholes setting with constant-relative-risk-aversion (CRRA) preferences—making it difficult to characterize and analyze optimal policies.

These failures highlight more nuanced intertemporal interactions between utilities from randomization and from investment than the APU can capture. We therefore propose the \emph{recursive perturbed utility} (RPU), inspired by recursive utility developed by \citet{epstein1989substitution} in discrete time and by \citet{duffie1992stochastic} in continuous time. While recursive utility was introduced to separate risk aversion from the elasticity of intertemporal substitution in consumption, our goal is to \emph{endogenize} the effect of the entropy perturbation. The form of RPU resembles the Uzawa utility \citep{uzawa1968time}, which captures habit formation through an endogenous discount rate affected by past consumption.\footnote{See \citet{bergman1985time} for economic and financial implications of Uzawa preferences.} 
In our setting, past randomization influences the agent’s current preference in the form of time preference: the perturbation induces \emph{endogenous discounting} that dampens the flow of utility from further randomization as the latter accumulates over time. In other words, RPU gives rise to a history-dependent time preference
on randomization, such that the more or longer the investor has randomized  in the past, the less weight
she places on current randomization.

We study Merton's problem in an incomplete Markovian stock market  with exogenous stochastic factors (e.g., \citealt{wachter2002portfolio,liu2007portfolio,chacko2005dynamic}) and CRRA utilities within the RPU framework featuring  entropy perturbations. We prove that the optimal portfolio (in terms of the risk exposure ratio) policy must follow a Gaussian distribution that is dependent on the stochastic factors but independent of wealth. The variance of this policy admits a closed-form expression, which is inversely proportional to both relative risk aversion and  the instantaneous variance of stock returns. Thus, higher risk aversion or higher market volatility reduces choice noise. Meanwhile, the optimal mean is dependent only on time and factor, characterized by a partial differential equation (PDE). In general, the mean decomposes into a myopic term (that depends on risk aversion and the stock's instantaneous return-risk trade-off but is \emph{independent} of the randomization) and an intertemporal hedging term (that captures the hedging need due to market incompleteness and its correlation with the randomization). In complete markets or when the stock and factor move independently, the hedging term vanishes and the mean optimal policy is purely myopic, coinciding with the classical Merton solution. A notable special case is log utility (unit relative risk aversion), in which intertemporal hedging is absent, and RPU collapses to APU yielding a mean optimal policy identical to the Merton benchmark.\footnote{See \citet{ddj2020calibrating} for a related result in the context of entropy-regularized reinforcement learning with log utility.}

In general, however, RPU biases the optimal mean relative to the classical Merton benchmark (without randomization). We quantify this effect via an asymptotic expansion in the perturbation weight (called the “\emph{temperature}”). The deviation of the optimal mean from the classical Merton policy is first order in temperature, while the associated relative wealth loss from adopting the RPU-optimal policy (versus the classical benchmark) is of higher order. This in turn provides a transparent measure of the financial cost associated with the preference for randomized choices.

%\subsection*{Related Literature}

It is worth noting that the perturbed utility with entropy function has been widely used in the reinforcement learning (RL) literature, albeit with a different name ``\textit{entropy regularization}",  as seen in works such as \citet{ziebart2008maximum,haarnoja2018soft,zhao2019maximum} for the discrete-time setting and \cite{wang2018exploration} for the continuous-time one, among many others. The resulting optimal policies take the form of a Gibbs measure. However, the purposes of introducing randomized  policies and entropy regularization for RL are conceptually and fundamentally different: the aim  is to design  algorithms to solve decision-making problems in a data-driven way, often without knowledge of the environment’s probabilistic structure. The entropy term is added to the reward function as an explicit incentive to encourage {\it exploration} by randomizing choices, with the ultimate goal to balance exploration (learning) and exploitation (optimization).

By contrast, our paper follows \citet{fudenberg2015stochastic} and employs  entropy to capture  humans’ intrinsic preference for randomization.\footnote{\citet{fudenberg2015stochastic} discuss also APU as a way to tackle payoff uncertainty; see Section 5 therein and the related references cited. However, no model parameter  unavailability is involved.}  We study and characterize optimal behaviors under this perturbed utility paradigm by assuming the investor is rational and possesses full knowledge of the market model. Here, stochastic choices do not arise from the need to explore, but rather from a genuine preference for randomization.\footnote{A companion paper \citep{daidata} studies the Merton problem from the RL and computational perspective. It introduces an auxiliary problem with a class of Gaussian policies solvable by RL algorithms and proves that its optimal policy can be used to recover the optimal (deterministic) policy to the original Merton problem. Additionally, several other works take this exploratory framework to devise RL algorithms applied to portfolio selection or stock execution problems, such as \citet{wang2019continuous,wang2023reinforcement,huang2024mean}. The purpose of this strand of literature is fundamentally different from that of the present paper.}

Different motivations aside, the mathematical  framework for incorporating stochastic policies and entropy functions in continuous time is premised upon the same notion of \textit{relaxed controls} introduced by \citet{wang2018exploration}.   Numerous subsequent works have laid the mathematical foundations for randomized controls in continuous time. For example, \cite{bender2024continuous} and \cite{jia2025accuracy} confirm that state processes obtained by randomizing decisions at discrete time grids converge weakly to the so-called exploratory state process  in \citet{wang2018exploration} as the sampling frequency approaches infinity. \citet{tang2022exploratory} establish the well-posedness of the Hamilton-Jacobi-Bellman equation associated with the exploratory formulation in \citet{wang2018exploration} by a viscosity solution approach. The framework has also been extended to mean-field games \citep{guo2020entropy}, optimal stopping problems \citep{dianetti2025entropy,dai2024learning}, risk-sensitive problems \citep{jia2024continuous}, and time-inconsistent problems \citep{ddj2020calibrating}. All these papers adopt the additive form, directly adding the entropy function as a running reward.
%, and it is known that the optimal policy for a class of linear-quadratic problems is a Gaussian distribution.
Other forms of regularization or perturbation functions in continuous time have been studied by \citet{han2023choquet}, but they remain restricted to the additive form.

%Economically, RPU belongs to a special case of the general stochastic differential utility \citep{duffie1992stochastic} and extends APU to dynamic environments. In contrast to the well-known Epstein-Zin recursive utility \citep{epstein1989substitution}, which separates risk aversion and elasticity of intertemporal substitution, our RPU does not possess this feature. Instead, it incorporates the risk aversion coefficient into the weight on entropy regularization. Meanwhile, the entropy function---i.e., the logarithm of the density of the stochastic policy---already resembles the log utility for consumption. The connection between randomization-investment and consumption-investment problems is natural, as both involve trade-offs: randomization is analogous to consumption in that both lead to current gains and future losses. For consumption, the gain is immediate utility, and the loss is reduced capital for investment. For randomization, the gain is entropy, while the loss is increased volatility of future wealth due to added randomization, which lowers (risk-averse) utility. Intriguingly, introducing recursive entropy regularization gives rise to a history-dependent time preference on exploration, such that the more or longer the investor has explored in the past, the less weight she places on current exploration. This, in turn, results in an overall \emph{endogenous} scheme for weighting randomization, alongside an exogenously given temperature parameter that weighs the relative importance of entropy to the bequest utility.
%
%
%
%
%\subsection*{Structure of the Paper}
The remainder of this paper is organized as follows. Section \ref{setup} presents the market setup, formulates Merton's problem  and motivates the  recursive additive entropy utility. Section \ref{sec:theory analysis} proves the optimality of Gaussian randomized policy and discusses conditions under which the policy is biased or unbiased. The section further  provides an asymptotic analysis of the impact of the primary temperature parameter. Finally, Section \ref{sec:concl} concludes. Additional results and discussions are provided in the appendix.

\section{Problem Formulation}\label{setup}

\subsection{Market Environment and Investment Objective}

Consider a financial market with two available investment assets: a risk-free bond offering a constant interest rate $r$ and a risky stock (or market index). The stock price process $S_t$ %is assumed to be observable and
evolves according to the stochastic differential equation (SDE):
\begin{equation}
	\label{eq:model stock}
	\frac{\dd S_t}{S_t}=\mu(t,X_t)\dd t+\sigma(t,X_t)\dd B_t,\ S_0=s_0,
\end{equation}
where $B$ denotes a one-dimensional Brownian motion. The instantaneous return rate  $\mu_t\equiv \mu(t,X_t)$  and  volatility $\sigma_t\equiv \sigma(t,X_t)$ both depend on an observable stochastic market factor process $X_t$. The dynamics of $X_t$ are given by
\begin{equation}
	\label{eq:model factor}
	\dd X_t =m(t,X_t)\dd t+\nu(t,X_t)[\rho \dd B_t +\sqrt{1-\rho^2}\dd \tilde B_t],\ X_0 = x_0,
\end{equation}
where $\tilde B$ is an independent one-dimensional Brownian motion, and $\rho\in(-1,1)$ is a constant representing the correlation between the stock return and the market factor changes. As a result, the market is generally incomplete. We focus on the Markovian setting, where the functions $ \mu(\cdot,\cdot)$, $ \sigma(\cdot,\cdot)$, $ m(\cdot,\cdot)$, and $ \nu(\cdot,\cdot)$ are deterministic and continuous in both $t$ and $x$ such that the SDEs \eqref{eq:model stock}-\eqref{eq:model factor} admit a unique weak solution. This market environment is sufficiently general to encompass many widely studied incomplete market models as special cases---for example, the Gaussian mean return model and the stochastic volatility model examined  in \citet{wachter2002portfolio}, \citet{liu2007portfolio}, \citet{chacko2005dynamic,dai2018dynamic}, among others.\footnote{We assume there is only one stock and one factor for notational simplicity. There is no essential difficulty in extending to the multi-stock/factor case.}

The investor's actions are represented as a scalar-valued non-anticipative process $a=\{a_t\}_{t\in[0,T)}$, where $a_t$ denotes the fraction of total wealth allocated to the stock at time $t$. The associated self-financing wealth process $W^a$ evolves according to the SDE:
\begin{equation}\label{classical_wealth}
	\frac{\dd W_t^a}{W^a_t}=[r+(\mu(t,X_t)-r)a_t]\dd t+\sigma(t,X_t) a_t\dd B_t,\ \ W_0^a = w_0.
\end{equation}
It is important to note that the solvency constraint $W_t^a\geq 0$ almost surely for all $t\in [0,T]$, is automatically satisfied for any square integrable $a$.
The classical Merton investment problem seeks to maximize the expected bequest utility of terminal wealth $W_T^{a}$ by selecting an appropriate strategy $a$:
\begin{equation}
	\label{objective_functional_classical}
	\max_{a} \mathbb E\left[ U(W_T^{a})\right],
\end{equation}
subject to the dynamics \eqref{eq:model factor}-\eqref{classical_wealth}, where $U(\cdot)$ is the utility function.

In the main text, we focus on the constant relative risk aversion (CRRA) utility, given by $$U(w) = \frac{w^{1-\gamma} - 1}{1-\gamma},$$ where $1\neq\gamma > 0$ is  the relative risk aversion parameter. The additive constant  ``$-\frac{1}{1-\gamma}$'' does not affect the preference reflected; we include it here for two reasons. First, the form converges to the log utility as  $\gamma \rightarrow 1$. Second, including the constant will properly reflect the magnitude of the bequest utility when we combine it with the preference for randomization in Section \ref{subsec:recursive}. The case for the constant absolute risk aversion (CARA) utility function will be discussed in Appendix \ref{sect: cara}.

%The common approach to solve the problem \eqref{objective_functional_classical} is via dynamic programming and Hamilton-Jacobi-Bellman (HJB) equation, which requires defining a family of problems starting from different initial time and states. That is,
%\begin{equation}
%\label{eq:classical optimal value function def}
%V(t,x,w) = \max_{\{a_s\}_{s\in [t,T)}} \mathbb E\left[ e^{-\beta (T - t)} U(W_T^{a}) \Big| X_t = x, W_t^{a} = w \right] .
%\end{equation}
%However, the conventional approach relies on solving a non-linear partial differential equation (PDE) that depends on the knowledge about the market.

\subsection{Preference for Randomization}
\label{subsec:incentive}

As it stands, the classical Merton problem leads to a \emph{deterministic} optimal (feedback) policy. The agent has no appetite for randomization  because it would  increase the uncertainty in wealth, which is unfavored  by risk aversion. Therefore, to capture the preference for randomization, we need to add reward for randomization explicitly into the objective function. Moreover, with randomized choices, the resulting law of motion of the wealth process described by \eqref{classical_wealth} will also be fundamentally altered because $a_t$ will now be randomly sampled from a distribution that is independent of the market randomness ($B_t,\tilde B_t$). The resulting mathematical formulation is put forward and coined as the ``exploratory formulation" for general stochastic control problems by \citet{wang2018exploration} in the reinforcement learning (RL) context. We now adapt that formulation to the Merton problem.

The central idea of this formulation is to describe the portfolio rules using a probability distribution for randomization. Specifically, suppose the investor selects her action (portfolio) at time $t$ by sampling from a probability distribution $\bm\pi_t$, where
$\{\bm\pi_t\}_{t\in[0,T]}=:\bm\pi$ is a distribution-valued process referred to as a randomized {\it control}. Under such a control, the exploratory or randomized dynamics of the wealth process are given by
\begin{equation}
	\label{controlled_system}
	\frac{\dd W^{\bm\pi}_t}{W^{\bm\pi}_t} = \left[r+(\mu(t,X_t)-r)\operatorname{Mean}(\bm\pi_t)\right]\dd t +\sigma(t,X_t) \left[ \operatorname{Mean}(\bm\pi_t) \dd B_t+\sqrt{ \operatorname{Var}(\bm\pi_t) }\dd \bar{B}_t\right],\ W^{\bm\pi}_0 = w_0,
\end{equation}
where $\bar B$ is another Brownian motion, independent of both $B$ and $\tilde B$, representing the additional randomness introduced into the wealth process due to randomization. Equation \eqref{controlled_system} shows that randomization effectively raises the volatility of the resulting wealth process, and $\operatorname{Mean}(\bm\pi_t)$ plays a similar role as $a_t$ in the classical dynamics \eqref{classical_wealth}. Intuitively, $W^{\bm\pi}$ can be viewed as the ``average" of infinitely many wealth processes generated by portfolio processes repeatedly sampled from the {\it same} randomized control $\bm\pi$.
Another interpretation of $W^{\bm\pi}$ is the weak limit of wealth processes under piecewise constant portfolios where the portfolios are sampled from $\bm\pi$ only at discrete time points, as the mesh size of the sampling grid tends to zero; see \cite{bender2024continuous} and \cite{jia2025accuracy}.
The derivation of \eqref{controlled_system} is analogous to that in \cite{ddj2020calibrating} and \cite{daidata}; see a detailed explanation in Appendix A of \cite{daidata} and \citet{jia2025accuracy} for how \eqref{controlled_system} can be viewed as the limit of sampling randomized choice at a high frequency.

Second, to describe the preference for randomization, we adopt the entropy function to measure the level of randomness associated with a distribution $\bm\pi$, denoted as $
\mathcal{H}(\bm\pi)=-\int_{\mathbb{R}} \bm\pi(a)\log \bm\pi(a)\dd a
$ (known as the differential entropy).\footnote{Here we implicitly assume $\bm\pi$ to be absolutely continuous with respect to the Lebesgue measure so that the entropy can be properly defined.} %As a consequence, since randomization is always required, a flow of payoff $\mathcal{H}(\bm\pi_t)$ will be generated.  
For Merton's problem, the simplest way seems to be just adding running APU from randomization to the bequest utility, leading to the maximization of the following objective:
 \begin{equation}
 	\label{objective_functional_no_recursive}
 	\mathbb E\left[ \int_0^T \lambda   \mathcal{H}(\bm\pi_s) \dd s  + U(W_T^{\bm\pi}) \right],
 \end{equation}
where $\lambda>0$ is the \textit{temperature parameter} representing the  importance of randomization relative to the bequest utility, and $\{\lambda\mathcal{H}(\bm\pi_t)\}_{t\in[0,T]}$ is the flow of utility from randomization. The form \eqref{objective_functional_no_recursive} is a direct adaptation  of the APU arising from the static setting of \citet{fudenberg2015stochastic}; it  has also been chosen in the RL context by many algorithms as an incentive for exploration (e.g., \citealt{ziebart2008maximum,haarnoja2018soft} among many others).

As it turns out, however, it is difficult to analyze the randomized Merton problem under the objective \eqref{objective_functional_no_recursive}, where a closed-form solution is unavailable even for the Black-Scholes market with the CRRA utility (see Appendix \ref{appendix:what if constant temperature} for explanations). Even worse, Proposition \ref{prop:ill posed} below shows that the problem with the objective \eqref{objective_functional_no_recursive} is ill-posed if risk aversion is not sufficiently strong.
\begin{proposition}
\label{prop:ill posed}
If $\gamma\in (0,1)$, then the problem with the objective functional \eqref{objective_functional_no_recursive}, subject to the wealth dynamics \eqref{controlled_system}, is ill-posed with an infinite optimal value.
\end{proposition}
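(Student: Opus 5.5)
To prove Proposition \ref{prop:ill posed}, we exhibit a family of admissible randomized controls along which the objective \eqref{objective_functional_no_recursive} diverges to $+\infty$. The key fact is that for $\gamma\in(0,1)$ the CRRA utility $U(w)=\frac{w^{1-\gamma}-1}{1-\gamma}$ is bounded below by $-\frac{1}{1-\gamma}$. Raising the variance of the randomization can therefore only lower the bequest term by a bounded amount. Meanwhile the entropy flow of a Gaussian with variance $\theta$ equals $\frac12\log(2\pi e\theta)$, which tends to $+\infty$ as $\theta\to\infty$.

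\textbf{Step 1 (choice of controls).} For each $\theta>0$ we take the deterministic, constant randomized control $\bm\pi_t=\mathcal N(0,\theta)$ for all $t\in[0,T)$. It is trivially admissible. Under it, \eqref{controlled_system} reduces to
\begin{equation*}
\frac{\dd W_t}{W_t}=r\,\dd t+\sigma(t,X_t)\sqrt{\theta}\,\dd\bar B_t .
\end{equation*}
This SDE has the explicit solution
\begin{equation*}
W_T=w_0\exp\left(rT-\tfrac{\theta}{2}\int_0^T\sigma_s^2\dd s+\sqrt\theta\int_0^T\sigma_s\dd\bar B_s\right).
\end{equation*}
The solution is strictly positive, so $W_T^{1-\gamma}\ge 0$.

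\textbf{Step 2 (bound the objective).} Because $W_T^{1-\gamma}\ge0$, we have $U(W_T)\ge -\frac{1}{1-\gamma}$. The objective is therefore at least
\begin{equation*}
\frac{\lambda T}{2}\log(2\pi e\theta)-\frac{1}{1-\gamma}.
\end{equation*}
Letting $\theta\to\infty$ drives this lower bound to $+\infty$, so the supremum of \eqref{objective_functional_no_recursive} is infinite. No maximizer exists and the problem is ill-posed.

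\textbf{Main obstacle.} The only delicate point is making sure the expectation in \eqref{objective_functional_no_recursive} is well defined, rather than of the form $\infty-\infty$, for these controls. Since $U(W_T)$ is bounded below, its expectation is well defined in $(-\frac{1}{1-\gamma},+\infty]$, and the entropy term is deterministic and finite, so no integrability issue arises. No assumptions on $\mu$, $\sigma$, $m$ or $\nu$ beyond the existence of a weak solution are needed.

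As an optional remark, the same lower-boundedness argument fails for $\gamma>1$, where $U(w)\to-\infty$ as $w\to0$. This is consistent with the proposition being stated only for $\gamma\in(0,1)$.
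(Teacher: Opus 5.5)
Your proof is correct and is essentially the paper's argument: take the constant Gaussian control $\mathcal N(0,v)$, use $U>-\frac{1}{1-\gamma}$ for $\gamma\in(0,1)$, and let $v\to\infty$ in the lower bound $\frac{\lambda T}{2}\log(2\pi e v)-\frac{1}{1-\gamma}$. Your remarks on the explicit wealth formula and on the well-definedness of the expectation are sound additions. One small correction: calling the control ``trivially admissible'' would require $\E\int_0^T\sigma_s^2\,\dd s<\infty$ if Definition \ref{def:admissible open loop}(ii) were imposed, although the argument does not need it.
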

\proof{Proof.}
When $\gamma\in (0,1)$, $U(w) > -\frac{1}{1-\gamma}$ for any $w>0$. We consider a simple policy $\bm\pi_t = \mathcal N(0,v)$, where $v > 0$ is a constant. Then the objective function in \eqref{objective_functional_no_recursive} is greater than $\frac{\lambda T}{2}\log(2\pi e v)  - \frac{1}{1-\gamma} $. Letting $v\to\infty$ causes the objective value to diverge to infinity.
\qed
\endproof

The intuition of Proposition \ref{prop:ill posed} is that when the investor is not sufficiently risk-averse, there is not enough deterrence from over-randomization while  the utility from the entropy term can be unbounded. Indeed, the proof of the proposition indicates that the problem becomes ill-posed whenever the bequest utility $U$ is bounded from below. Hence, a na\"ive APU is not appropriate in the Merton setting  to yield reasonable economic predictions. %As a remark, from the proof of Proposition \ref{prop:ill posed}, we also notice that the constant part ``$-\frac{1}{1-\gamma}$'' in the bequest utility function does not affect this conclusion because the bequest utility is always bounded from below when $\gamma\in (0,1)$. In fact, based on the argument above, it is clear that the problem bocomes ill-posed whenever the utility function $U$ is bounded below. This result suggests that naively adding up the preference for randomization and the bequest utility does not lead to reasonable implications, and hence, it implies that there is a more sophisticated trade-off that should be included in the preference model.
%Perhaps the simplest way to modify the model is to introduce a wealth-dependent temperature parameter. In this setup, when wealth is low, the temperature correspondingly decreases, thereby preventing investments from becoming excessively randomized. Motivated by this, 
To remedy this problem, we introduce a different type of perturbed utility for randomization. 

\subsection{Recursive Perturbation Utility}
\label{subsec:recursive}
Let a function $\lambda(\cdot,\cdot) > 0$ be exogenously given, called a {\it primary} temperature function (it will be taken as a constant in Section \ref{subsec:asymptotic}).
Define by $J^{\bm\pi}=\{J^{\bm\pi}_t\}_{t\in [0,T]}$ the following {\it recursive} (entropy) perturbed utility (RPU) under a given randomized  control $\bm\pi$, which is an $\{\f_t\}_{t\in [0,T]}$-adapted process  satisfying
\begin{equation}
	\label{objective_functional}
	J^{\bm\pi}_t =\mathbb E\left[ \int_t^T \lambda(s,X_s)  [(1-\gamma)J_s^{\bm\pi} + 1] \mathcal H(\bm\pi_s) \dd s  +  U(W_T^{\bm\pi}) \Big| \f_t \right],
\end{equation}
where $W^{\bm\pi}$ is the wealth process under ${\bm\pi}$, determined by  \eqref{eq:model factor}  and \eqref{controlled_system}, and $\{\cal F\}_{t\in [0,T]}$ is the filtration generated by $\mathbf{B}: = (B,\bar{B},\tilde{B})^\top$.

The recursive form \eqref{objective_functional} effectively weighs the entropy utility endogenously: The term $\lambda(t,X_t) \left[ (1-\gamma)J_t^{\bm\pi} + 1 \right] $ can be viewed as a {\it utility-dependent} weight on randomization determined in a recursive way. %This form is motivated by the form of the bequest utility we specified. 
Notice that when $\gamma = 1$ (corresponding to the log utility function), the weight on the entropy term becomes independent of the utility, and \eqref{objective_functional} further reduces to \eqref{objective_functional_no_recursive} when $\lambda(t,x)\equiv \lambda$, i.e., the APU.

Under some proper assumptions (e.g., in \citealt{el1997backward}),  $J^{\bm\pi}$, which is an ${\cal F}_t$-adapted process,  solves the following backward stochastic differential equation (BSDE):
\begin{equation}
	\label{eq:bsde recursive general lambda}
	\dd J_t^{\bm\pi} = -\lambda_t \mathcal{H}(\bm\pi_t)  \left[(1-\gamma)J_t^{\bm\pi} + 1\right]   \dd t + \mathbf{Z}^{\bm\pi}_t \cdot \dd \mathbf{B}_t, \   J_T^{\bm\pi} = U({W_T^{\bm\pi}}),
\end{equation}
where (and henceforth) $\lambda_t\equiv \lambda(t,X_t)$ and  the notation $\cdot$ denotes the inner product. As explained in the Introduction, our formulation is motivated by the notion of ``recursive utility" in the economics literature \citep{epstein1989substitution,duffie1992stochastic}, with consumption being the counterpart of randomization.   In particular, the negative ``$\dd t$'' term in \eqref{eq:bsde recursive general lambda}, as a function of $\bm\pi_t$ and $J^{\bm\pi}_t$, can be written as $f(\bm\pi, J, \lambda) = \lambda\mathcal{H}(\bm\pi) - \left[ -\lambda (1-\gamma)\mathcal{H}(\bm\pi) \right] J$. This term  is the ``\textit{aggregator}'' in the recursive utility jargon, while the term $-\lambda (1-\gamma) \mathcal{H}(\bm\pi)$ corresponds to the  discount rate that depreciates the future utility $J$ into today's value.

We can formalize the above discussion to show a time preference on randomization implied by the definition of $J^{\bm\pi}$, in the same way as the  Uzawa utility  \citep{uzawa1968time} for consumption.  Indeed, we  explicitly solve \eqref{objective_functional} as
\begin{equation}\label{OF_expecation}
	J_t^{\bm\pi} = \E\left[\int_t^T e^{-\int_t^s -\lambda_{\tau} (1-\gamma)\mathcal{H}(\bm\pi_{\tau})  \dd \tau}\lambda_s \mathcal{H}(\bm\pi_s)\dd s + e^{-\int_t^T -\lambda_{\tau} (1-\gamma)\mathcal{H}(\bm\pi_{\tau})  \dd \tau} U(W_T^{\bm\pi}) \Big| \f_t \right].
\end{equation}
%where  $\lambda_t\equiv \lambda(t,X_t)$.
%	Note that if we replace the entropy $\mathcal H(\bm\pi_t)$ with the utility of consumption, then \eqref{OF_expecation} is the so-called Uzawa utility; see \cite{uzawa1968time}.  In economics, the Uzawa utility   captures the phenomenon that people's current time preference (discount factor)  may be affected by their past consumption, leading to an endogenous discount rate. More discussions on the economic and financial implications of the Uzawa utility can be found in \citet{bergman1985time}. In the RL context,
This expression implies  that the current weight on randomization  depends on  past randomization.
Specifically,
%The modified objective function \eqref{OF_expecation} turns out to be mathematically tractable and has financial meaning.
the (endogenous) temperature parameter for randomization now (compared to \eqref{objective_functional_no_recursive}) becomes  $\lambda_s e^{-\int_0^s -\lambda_{\tau} (1-\gamma)\mathcal{H}(\bm\pi_{\tau})  \dd \tau}$. In other words, at any time $s$, a discount  $e^{-\int_0^s -\lambda_{\tau} (1-\gamma)\mathcal{H}(\bm\pi_{\tau})  \dd \tau}$ is applied to  \eqref{objective_functional_no_recursive}. Moreover, empirical studies indicate that the typical risk-aversion parameter $\gamma > 1$ (see, e.g., \citealt{kydland1982time}), rendering $ -\lambda_{\tau} (1-\gamma)>0$. This implies that the more the investor has randomized  in the past, the less weight she places on current randomization, which is intuitive and sensible. Next, consider a small risk aversion $\gamma \in (0,1)$, where the discounting factor becomes negative, which seems  to incentivize larger randomization. However, there is a similar discount applied to the bequest utility. So, as  randomization increases, the constant part of the bequest utility ``$-\frac{1}{1-\gamma}$'', which now is a {\it negative} number, will go even more negative. This introduces a proper trade-off between randomization and bequest utility even when risk aversion is weak, avoiding the ill-posedness that occurs in the APU case.

Technically, the modified objective function \eqref{OF_expecation} makes our model mathematically tractable, as will be shown in the subsequent analysis.  %Second, in the terminal wealth term, we see that it includes a term

Henceforth denote $\mu_t\equiv \mu(t,X_t)$ and $\sigma_t\equiv \sigma(t,X_t)$. We are now ready to formulate our RPU Merton problem, by first formally introducing the set of admissible controls.
\begin{definition}
\label{def:admissible open loop}
An $\f_t$-adapted, probability-density-valued process 	$\bm\pi=\{\bm\pi_s, 0\leq s \leq T\}$ is called an (open-loop) admissible control, if
\begin{enumerate}[(i)]
\item for each $0\leq s\leq T$, $\bm\pi_s \in \mathcal P(\mathbb R)$ a.s., where $\mathcal P(\mathbb R)$ is the set of all probability densities on real numbers;
\item $\mathbb E\left[ \int_0^T|\sigma_s|^2(\operatorname{Mean}(\bm\pi_s)^2+\operatorname{Var}(\bm\pi_s))\dd s\right]+\mathbb E\left[ \int_0^T |\mu_s \operatorname{Mean}(\bm\pi_s)|\dd s\right]<\infty $;
\item  	$\mathbb E\left[e^{\int_0^T 2 \lambda_s |1-\gamma| |\mathcal H(\bm\pi_s)| ds}\right]+\mathbb E\left[ |U(W^{\bm\pi}_T)|^2\right]<\infty, $
where $\{X_s\}_{s \in [0,T]}$ and $\{W^{\bm\pi}_s\}_{s\in [0,T]}$ satisfy \eqref{eq:model factor} and \eqref{controlled_system}, respectively.
\end{enumerate}
\end{definition}

Given an admissible control $\bm\pi$ and an initial state pair $(w_0,x_0)$, we define the recursive utility $J_t^{\bm\pi}$ through \eqref{objective_functional},
%\begin{equation}
%\label{eq:reward}
%J_t^{\bm\pi}=\mathbb E\left[ \int_t^T \lambda(s,X_s) \left[ (1-\gamma)J_s^{\bm\pi}+ 1 \right]\mathcal H(\bm\pi_s) \dd s  +  U(W_T^{\bm\pi})\Big| \f_t \right],
%\end{equation}
where $\{X_t\}_{t\in [0,T]}$ and $\{W_t^{\bm\pi}\}_{t\in [0,T]}$ solve \eqref{eq:model factor} and \eqref{controlled_system}, respectively. 	
A technical question with the above definition is whether the entropy term in \eqref{objective_functional} has a positive weight, which is answered by the following proposition. Hence, our RPU indeed incentivizes randomization.
\begin{proposition}
\label{lemma:positive weight}
For any admissible control $\bm\pi$, we have $(1-\gamma)J^{\bm\pi}_t + 1 > 0$ almost surely.
\end{proposition}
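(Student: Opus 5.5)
The natural route is to change variables so that the recursion becomes linear and homogeneous. Set $Y_t := (1-\gamma)J^{\bm\pi}_t + 1$. Multiplying the BSDE \eqref{eq:bsde recursive general lambda} by $(1-\gamma)$ gives
\begin{equation*}
\dd Y_t = -(1-\gamma)\lambda_t \mathcal H(\bm\pi_t) Y_t \,\dd t + (1-\gamma)\mathbf Z^{\bm\pi}_t\cdot \dd \mathbf B_t .
\end{equation*}
The terminal condition becomes
\begin{equation*}
Y_T = (1-\gamma)U(W^{\bm\pi}_T) + 1 = (W^{\bm\pi}_T)^{1-\gamma}.
\end{equation*}
This is exactly why the constant $-\frac{1}{1-\gamma}$ was kept in $U$. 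Equivalently, in the explicit representation \eqref{OF_expecation}, multiply by $(1-\gamma)$ and add $1$. Write $\Gamma_{t,s}:=e^{\int_t^s \lambda_\tau(1-\gamma)\mathcal H(\bm\pi_\tau)\dd\tau}$. Then $\lambda_s(1-\gamma)\mathcal H(\bm\pi_s)\Gamma_{t,s} = \partial_s \Gamma_{t,s}$, so the running integral telescopes to $\Gamma_{t,T}-1$. This yields
\begin{equation*}
Y_t = \E\left[\Gamma_{t,T}\,(W^{\bm\pi}_T)^{1-\gamma}\,\Big|\,\f_t\right].
\end{equation*}

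Strict positivity then follows from two facts. First, $\Gamma_{t,T}>0$ because it is an exponential of a finite quantity. Second, $W^{\bm\pi}_T>0$ a.s., since the linear SDE \eqref{controlled_system} gives $W^{\bm\pi}$ as $w_0$ times a stochastic exponential, which is strictly positive when the coefficients are square integrable. The conditional expectation of an a.s. strictly positive integrable random variable is a.s. strictly positive: on $A=\{Y_t\le 0\}\in\f_t$ we would have $\E[\Gamma_{t,T}(W_T^{\bm\pi})^{1-\gamma}\one_A]\le 0$, forcing $P(A)=0$. Hence $(1-\gamma)J_t^{\bm\pi}+1>0$ a.s.

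The main obstacle is justifying the representation, namely integrability and uniqueness, using Definition \ref{def:admissible open loop}(iii). To show $\Gamma_{t,T}(W_T^{\bm\pi})^{1-\gamma}$ is integrable, I would apply Cauchy--Schwarz. We have $\Gamma_{t,T}\le e^{\int_0^T\lambda_s|1-\gamma||\mathcal H(\bm\pi_s)|\dd s}$, whose square is integrable by (iii). Also $(W_T^{\bm\pi})^{1-\gamma} = (1-\gamma)U(W_T^{\bm\pi})+1$ is square integrable by (iii). So the product is in $L^1$, and the same bound makes the running term in \eqref{OF_expecation} integrable.

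To connect with the definition \eqref{objective_functional}, I would verify that the process defined by \eqref{OF_expecation} satisfies the linear BSDE, applying Itô's product rule to $\Gamma_{0,t}Y_t$, which is a martingale. For uniqueness, the difference of two solutions solves a homogeneous linear BSDE with zero terminal value. Multiplying by $\Gamma_{0,t}$ turns this into a local martingale that, under the $L^2$-type bounds of (iii), is a true martingale with zero terminal value, hence identically zero. This is the standard linear BSDE argument of \citet{el1997backward}, and I would cite it rather than redo it.
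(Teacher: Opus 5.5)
Your proposal is correct and follows essentially the paper's route. You use the same substitution $Y_t=(1-\gamma)J^{\bm\pi}_t+1$, which turns the recursion into a homogeneous linear BSDE with terminal value $(W^{\bm\pi}_T)^{1-\gamma}>0$, read positivity of $W^{\bm\pi}$ off the exponential formula, and then, where the paper cites the linear-BSDE comparison principle of \citet{el1997backward}, you unpack it via $Y_t=\E[\Gamma_{t,T}(W^{\bm\pi}_T)^{1-\gamma}\mid\f_t]$ and the conditional-expectation argument, which is how that comparison result is proved and makes the strict inequality at every $t$ explicit.
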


\proof{Proof.}
%\begin{proof}
	Recall that RPU satisfies the  BSDE  \eqref{eq:bsde recursive general lambda}. Given an admissible control $\bm\pi$, the condition $(ii)$ in Definition \ref{def:admissible open loop} guarantees that the solution to \eqref{controlled_system} exists, and can be written as
	\[\begin{aligned}
		W_t^{\bm\pi}= w_0 \exp\bigg\{ \int_0^t &   \left[r+(\mu_s-r)\operatorname{Mean}(\bm\pi_s) + \frac{1}{2}\sigma_s^2\left(\operatorname{Mean}(\bm\pi_s)^2 + \operatorname{Var}(\bm\pi_s) \right) \right]\dd s \\
		& + \sigma_s \left[ \operatorname{Mean}(\bm\pi_s) \dd B_s+\sqrt{ \operatorname{Var}(\bm\pi_s) }\dd \bar B_s \right]\bigg\} >0 .
	\end{aligned} \]
	In addition, \eqref{eq:bsde recursive general lambda} is a linear BSDE, and the condition $(iii)$ in Definition \ref{def:admissible open loop} ensures that $U(W_T^{\bm\pi})$ and $\{\lambda(t,X_t)\mathcal{H}(\bm\pi_t)\}_{t\in [0,T]}$ are respectively square-integrable random variable and process. Hence by \citet[Proposition 2.2]{el1997backward}, \eqref{eq:bsde recursive general lambda} admits a unique square-integrable solution.
	%where $\mathbf{B}_t = (B_t,\bar{B}_t,\tilde{B}_t)^T$ is a three-dimensional standard Brownian motion and $\cdot$ means the inner product.
	
	Define $\tilde{Y}^{\bm\pi}_t: =  (1-\gamma)J_t^{\bm\pi} + 1 $.  Applying It\^o's lemma, we obtain that $\tilde{Y}^{\bm\pi}$ solves the following BSDE:
	\[ \dd \tilde{Y}^{\bm\pi}_t  = -(1-\gamma)\lambda(t,X_t) \mathcal{H}(\bm\pi_t)\tilde{Y}^{\bm\pi}_t  \dd t + (1-\gamma) \tilde{\mathbf{Z}}^{\bm\pi}_t \cdot \dd \mathbf{B}_t, \  \tilde{Y}^{\bm\pi}_T = ({W_T^{\bm\pi}})^{1-\gamma} > 0. \]
	From the comparison principle of linear BSDEs (see \citealt[Corollary 2.2]{el1997backward}), it follows that  $ \tilde{Y}^{\bm\pi}_t > 0 $ almost surely.%, by comparing to the solution to a trivial BSDE: $\dd Y_t = 0 \dd t + 0  \cdot \dd B_t, Y_T = 0$.}
	\qed
\endproof
%\end{proof}

\medskip

To apply dynamic programming  to problem \eqref{objective_functional}, we further restrict our attention to feedback policies. A (randomized) {\it feedback policy} (or simply a {\it policy}) $\bm\pi=\bm\pi(\cdot,\cdot,\cdot)$ is a density-valued function of time and state,
under which \eqref{eq:model factor} and \eqref{controlled_system} become a Markovian system. For any initial time $t$ and initial state $(w,x)$, a policy $\bm\pi$ induces the open-loop control  $\bm\pi_s = \bm\pi(s,W_s^{\bm\pi},X_s)$, where $\{X_s\}_{s\in [t,T]}$ and $\{W_s^{\bm\pi}\}_{s\in [t,T]}$ are the solutions to the corresponding Markovian system given $W_t^{\bm\pi}=w$ and $X_t=x$. Denote by $\Pi$ the set of policies that induce admissible open-loop controls.

Given $\bm\pi\in \Pi$, define its {\it value function} $V^{\bm\pi}(\cdot,\cdot,\cdot)$ as
\begin{equation}\label{OF_expecation_txw}
\begin{aligned}
V^{\bm\pi}(t,w,x): = & \E\big[\int_t^T e^{-\int_t^s -\lambda (1-\gamma)\mathcal{H}(\bm\pi_{\tau})  \dd \tau}\lambda \mathcal{H}(\bm\pi_s)\dd s + e^{-\int_t^T -\lambda (1-\gamma)\mathcal{H}(\bm\pi_{\tau})  \dd \tau} U(W_T^{\bm\pi}) \Big|W_t^{\bm\pi} = w, X_t = x \big],\\
&\;\;\;\;(t,w,x)\in [0,T]\times \R_+\times \R.
\end{aligned}
\end{equation}
The Feynman-Kac formula yields that this function satisfies the PDE:
\begin{equation}
\label{eq:pde for j pi}
\begin{aligned}
\frac{\partial V^{\bm\pi}(t,w,x)}{\partial t} & +  \Big( r + \big(\mu(t,x) - r\big)\operatorname{Mean}\left( \bm\pi(t,w,x) \right) \Big)w  V^{\bm\pi}_w(t,w,x) \\
& + \frac{1}{2}\sigma^2(t,x)\Big( \operatorname{Mean}\left( \bm\pi(t,w,x) \right)^2 +  \operatorname{Var}\left( \bm\pi(t,w,x) \right) \Big)w^2  V^{\bm\pi}_{ww}(t,w,x) \\
& + m(t,x) V^{\bm\pi}_x(t,w,x) + \frac{1}{2}\nu^2(t,x) V^{\bm\pi}_{xx}(t,w,x)+ \rho\nu(t,x)\sigma(t,x)\operatorname{Mean}(\bm\pi) w  V^{\bm\pi}_{xw}(t,w,x) \\
& + \lambda(t,x) \mathcal{H}\left(\bm\pi(t,w,x)\right)\big[ (1-\gamma)V^{\bm\pi}(t,w,x) + 1\big] = 0,\ V^{\bm\pi}(T,w,x) = U(w).
\end{aligned}
\end{equation}

Using the relation between BSDEs and PDEs, we can represent the recursive utility $J^{\bm\pi}$ via the value  function $V^{\bm\pi}$:
\begin{equation}\label{connection}
J_t^{\bm\pi}=V^{\bm\pi}(t,W_t^{\bm\pi},X_t), \;\;\mbox{a.s.},\;\;\;t\in [0,T],
\end{equation}
where $\{X_t\}_{t \in [0,T]}$ and $\{W^{\bm\pi}_t\}_{t\in [0,T]}$ satisfy \eqref{eq:model factor} and \eqref{controlled_system} with $W_0^{\bm\pi}=w_0$ and $X_0=x_0$, respectively.	

Finally, we define the {\it optimal} value function  as
\begin{equation}\label{optimalvalue}
V(t,w,x):=\sup_{\bm\pi\in\Pi} J^{\bm\pi}(t,w,x),\;\;(t,w,x)\in [0,T]\times \R_+\times \R.
\end{equation}

\section{Theoretical Analysis}
\label{sec:theory analysis}
\subsection{Gaussian Randomization}

It is straightforward, as in \cite{wang2018exploration}, to derive that the optimal value function $V$ satisfies the following HJB equation via dynamic programming:
\begin{equation}
\label{eq:hjb 0}
\begin{aligned}
& \frac{\partial V}{\partial t} + \sup_{\bm\pi\in \mathcal P(\mathbb R)}\Bigg\{ \Big( r + \big(\mu(t,x) - r\big)\operatorname{Mean}(\bm\pi) \Big)wV_w + \frac{1}{2}\sigma^2(t,x)\Big( \operatorname{Mean}(\bm\pi)^2 +  \operatorname{Var}(\bm\pi) \Big)w^2V_{ww} \\
& + m(t,x) V_x + \frac{1}{2}\nu^2(t,x)V_{xx} + \rho\nu(t,x)\sigma(t,x)\operatorname{Mean}(\bm\pi) wV_{wx} + \lambda(t,x) \mathcal{H}(\bm\pi)\big[ (1-\gamma)V + 1\big] \Bigg\} = 0,
\end{aligned}
\end{equation}
with the terminal condition $V(T,w,x) =  U(w) = \frac{w^{1-\gamma} - 1}{1-\gamma}$.

At first glance, equation \eqref{eq:hjb 0}  is a highly nonlinear PDE and appears  hard to analyze. However, we can reduce it to a simpler PDE based on which the optimal randomized  policy can be explicitly represented.
\begin{theorem}
\label{thm:hjb solution}
Suppose $u$ is a classical solution of the following PDE
\begin{equation}
\label{eq:hjb 4}
\begin{aligned}
	\frac{\partial u}{\partial t} & + (1-\gamma)r  + m(t,x)u_x + \frac{1}{2}\nu^2(t,x)(u_{xx} + u_x^2) + \frac{(1-\gamma)\lambda(t,x)}{2}\log \frac{2\pi \lambda(t,x)}{\gamma  \sigma^2(t,x)}  \\
	& + \frac{1-\gamma}{2\gamma}\left[ \frac{\big( \mu(t,x) - r \big)^2}{\sigma^2(t,x)} + \frac{2\rho\big( \mu(t,x)-r \big)\nu(t,x)}{\sigma(t,x)}u_x  + \rho^2\nu^2(t,x)u_x^2\right]  = 0,
\end{aligned}
\end{equation}
with the terminal condition $u(T,x) = 0$, and the derivatives of $u$ up to the second order have polynomial growth. Let $\bm\pi^*(t,x)$ be a normal distribution with
\begin{equation}
\label{eq:optimal exploratory policy via u}
\operatorname{Mean}(\bm\pi^*(t,x)) = \frac{\mu(t,x) - r}{\gamma \sigma^2(t,x)} +\frac{\rho\nu(t,x)}{\gamma\sigma(t,x)}u_x(t,x),\ \operatorname{Var}(\bm\pi^*(t,x)) = \frac{\lambda(t,x)}{\gamma\sigma^2(t,x)}.
\end{equation}
If $\bm\pi^*\in\Pi$, then it is the optimal randomized  policy. Furthermore, the optimal value function is  $V(t,w,x) = \frac{w^{1-\gamma}e^{u(t,x)} - 1}{1-\gamma}$.
\end{theorem}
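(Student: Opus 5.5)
The plan is a verification argument. We first show that the candidate $\tilde V(t,w,x):=\frac{w^{1-\gamma}e^{u(t,x)}-1}{1-\gamma}$ solves the HJB equation \eqref{eq:hjb 0}, with the supremum attained at $\bm\pi^*$. We then compare $\tilde V$ with $J^{\bm\pi}$ for an arbitrary $\bm\pi\in\Pi$ using the linear BSDE \eqref{eq:bsde recursive general lambda}.

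\textbf{Step 1 (the supremum in the HJB).} Substitute $\tilde V$. The derivatives are
\begin{equation*}
w\tilde V_w=w^{1-\gamma}e^{u},\quad w^2\tilde V_{ww}=-\gamma w^{1-\gamma}e^{u},\quad w\tilde V_{wx}=w^{1-\gamma}e^{u}u_x,
\end{equation*}
and $(1-\gamma)\tilde V+1=w^{1-\gamma}e^{u}>0$. After factoring out $w^{1-\gamma}e^u$, the part of the Hamiltonian that depends on $\bm\pi$ becomes
\begin{equation*}
(\mu-r)M-\tfrac{\gamma}{2}\sigma^2(M^2+\operatorname{Var})+\rho\nu\sigma u_x M+\lambda\mathcal H(\bm\pi),
\end{equation*}
where $M=\operatorname{Mean}(\bm\pi)$.

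We split the optimization into two parts. For a fixed mean and variance, the Gaussian maximizes differential entropy, with $\mathcal H=\frac12\log(2\pi e\operatorname{Var})$. Maximizing over $\operatorname{Var}$ then gives $\operatorname{Var}=\lambda/(\gamma\sigma^2)$. Maximizing the remaining quadratic in $M$ gives the mean in \eqref{eq:optimal exploratory policy via u}.

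Next, plug these maximizers back in, together with $\partial_t\tilde V=\frac{w^{1-\gamma}e^u u_t}{1-\gamma}$ and the $x$-derivative terms $\frac{w^{1-\gamma}e^u}{1-\gamma}(u_x, u_{xx}+u_x^2)$. Multiply through by $(1-\gamma)/(w^{1-\gamma}e^u)$. The constant $r$ term and the $\lambda\cdot\frac12\cdot\log e$ term should combine with $-\frac{\gamma}{2}\sigma^2\operatorname{Var}=-\lambda/2$ so that exactly \eqref{eq:hjb 4} remains. Indeed, $\frac{\lambda}{2}\log(2\pi e\lambda/(\gamma\sigma^2))-\frac{\lambda}{2}=\frac{\lambda}{2}\log\frac{2\pi\lambda}{\gamma\sigma^2}$, which matches the log term in \eqref{eq:hjb 4}. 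The terminal condition $u(T,\cdot)=0$ gives $\tilde V(T,w,x)=U(w)$.

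\textbf{Step 2 (verification via the BSDE).} Fix $\bm\pi\in\Pi$ and define $Y_t:=\tilde V(t,W^{\bm\pi}_t,X_t)$. Apply It\^o's formula using \eqref{controlled_system} and \eqref{eq:model factor}. Since $\tilde V$ solves the HJB with a supremum, we get
\begin{equation*}
\dd Y_t=-\Big(\lambda_t\mathcal H(\bm\pi_t)\big[(1-\gamma)Y_t+1\big]-\delta_t\Big)\dd t+\mathbf Z_t\cdot\dd\mathbf B_t,\qquad Y_T=U(W_T^{\bm\pi}),
\end{equation*}
with $\delta_t\ge0$, and $\delta\equiv0$ when $\bm\pi=\bm\pi^*$.

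Now compare with the linear BSDE \eqref{eq:bsde recursive general lambda} satisfied by $J^{\bm\pi}$. The difference $D=Y-J^{\bm\pi}$ solves a linear BSDE with zero terminal value, linear coefficient $(1-\gamma)\lambda\mathcal H$, and nonnegative driver $\delta$. Its explicit solution is
\begin{equation*}
D_t=\mathbb E\Big[\int_t^T e^{\int_t^s(1-\gamma)\lambda\mathcal H\,\dd\tau}\,\delta_s\,\dd s\,\Big|\,\f_t\Big]\ \ge 0.
\end{equation*}
Hence $J^{\bm\pi}_t\le\tilde V(t,W_t,X_t)$, with equality for $\bm\pi^*$. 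Taking $t$ as the initial time with arbitrary $(w,x)$ yields $V=\tilde V$ and the optimality of $\bm\pi^*$. This argument is the comparison principle of \citet{el1997backward} already used in Proposition \ref{lemma:positive weight}.

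\textbf{Main obstacle.} The delicate point is justifying that the local martingale $\int\mathbf Z\cdot\dd\mathbf B$ for $Y$ is a true martingale, equivalently that $(Y,\mathbf Z)$ is the square-integrable solution to which uniqueness and comparison apply. We intend to handle this with a localization argument: stop at exit times of $(W,X)$ from compacts and pass to the limit.

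The ingredients for the limit are as follows. The polynomial growth of $u$ and its derivatives controls $e^{u}$ and $u_x$. Condition (iii) in Definition \ref{def:admissible open loop} gives integrability of the discount factor $e^{\int 2\lambda|1-\gamma||\mathcal H|}$ and of $U(W_T)^2$. Condition (ii) controls the It\^o integrands. Together with H\"older's inequality, these should give uniform integrability of $Y$ along the stopping times, so that $D_t\ge0$ survives the limit.

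A secondary issue is that an arbitrary $\bm\pi_t$ need not be Gaussian. This does not matter, because Step 1 bounds the Hamiltonian over all of $\mathcal P(\mathbb R)$ through the maximum-entropy inequality.
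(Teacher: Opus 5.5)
Your proposal is essentially the paper's proof. It uses the same two-stage reduction (Gaussian maximizes entropy at fixed mean and variance, then optimize the mean--variance pair) to turn the HJB into \eqref{eq:hjb 4}. The verification then runs It\^o's formula along an arbitrary admissible control with discounting by $e^{K^{\bm\pi}_t}$, $K^{\bm\pi}_t=\int_0^t(1-\gamma)\lambda_s\mathcal H(\bm\pi_s)\,\dd s$ (exactly what your explicit representation of $D$ encodes), localizes by stopping times, and passes to the limit via conditions (ii)--(iii) to get $J^{\bm\pi}\le\tilde V$, with equality for $\bm\pi^*$.

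One sign slip: the HJB inequality makes the $\dd t$-coefficient of $Y$ equal to $-\lambda_t\mathcal H(\bm\pi_t)[(1-\gamma)Y_t+1]-\delta_t$, so the driver should carry $+\delta_t$ rather than your displayed $-\delta_t$ (which would give $D\le 0$). Your formula for $D_t$ already uses the correct sign, so the conclusion is unaffected.
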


\proof{Proof.}
First, let us show how to deduce \eqref{eq:optimal exploratory policy via u} from \eqref{eq:hjb 0}.
Note that the  ``supermum'' in \eqref{eq:hjb 0} can be achieved via a two-stage optimization procedure: first maximize over distributions with a {\it fixed}  mean and variance pair, and then maximize over all possible such pairs. For the first problem, the entropy is maximized at a normal distribution with the fixed mean and variance, with the maximum entropy value $\mathcal{H}(\bm\pi) = \frac{1}{2}\log\big( 2 \pi e \operatorname{Var}(\bm\pi) \big)$ where $\operatorname{Var}(\bm\pi)$ is the given fixed variance.  Therefore, \eqref{eq:hjb 0} can be simplified as
\begin{equation}
	\label{eq:hjb 1}
	\begin{aligned}
		& \frac{\partial V}{\partial t} + \sup_{\operatorname{Mean}(\bm\pi),\operatorname{Var}(\bm\pi)}\Bigg\{ \Big( r + \big(\mu(t,x) - r\big)\operatorname{Mean}(\bm\pi) \Big)wV_w + \frac{1}{2}\sigma^2(t,x)\Big( \operatorname{Mean}(\bm\pi)^2 +  \operatorname{Var}(\bm\pi) \Big)w^2V_{ww} \\
		& + m(t,x) V_x + \frac{1}{2}\nu^2(t,x)V_{xx} + \rho\nu(t,x)\sigma(t,x)\operatorname{Mean}(\bm\pi)w V_{wx} + \frac{\lambda(t,x)}{2}\log\big( 2\pi e \operatorname{Var}(\bm\pi) \big) \big[  (1-\gamma)V + 1\big]  \Bigg\}
		 = 0.
	\end{aligned}
\end{equation}

To analyze the above PDE, we start with  the ansatz that $V(t,w,x) = \frac{w^{1-\gamma}}{1-\gamma}v(t,x) - \frac{1}{1-\gamma}$ for some  function $v$. Then \eqref{eq:hjb 1} becomes
\begin{equation}
	\label{eq:hjb 2}
	\begin{aligned}
		& \frac{w^{1-\gamma}}{1-\gamma} \frac{\partial v}{\partial t} +  \sup_{\operatorname{Mean}(\bm\pi),\operatorname{Var}(\bm\pi)}\Bigg\{ \Big( r + \big(\mu(t,x) - r\big)\operatorname{Mean}(\bm\pi) \Big)w^{1-\gamma}v - \frac{\gamma}{2}\sigma^2(t,x)\Big( \operatorname{Mean}(\bm\pi)^2 +  \operatorname{Var}(\bm\pi) \Big)w^{1-\gamma}v \\
		& + m(t,x) \frac{w^{1-\gamma}}{1-\gamma}v_x + \frac{1}{2}\nu^2(t,x)\frac{w^{1-\gamma}}{1-\gamma}v_{xx}  + \rho\nu(t,x)\sigma(t,x)\operatorname{Mean}(\bm\pi) w^{1-\gamma}v_x  + \frac{\lambda(t,x)}{2}\log\big( 2 \pi e \operatorname{Var}(\bm\pi) \big) w^{1-\gamma}v \Bigg\}  =0.
	\end{aligned}
\end{equation}

If $ v > 0$ (to be verified later), then the first-order conditions of the maximization problem on the left-hand side of the above equation yield the maximizers
\begin{equation}\label{mv}
	\operatorname{Mean}(\bm\pi^*) = \frac{\mu(t,x) - r}{\gamma \sigma^2(t,x)} +\frac{\rho\nu(t,x)v_x(t,x)}{\gamma\sigma(t,x)v(t,x)},\;\;\;
	\operatorname{Var}(\bm\pi^*) = \frac{\lambda(t,x)}{\gamma \sigma^2(t,x)}.
\end{equation}

%Here, we make  the assumption that  $v= (1-\gamma)g + \kappa$, which will be verified later. Hence the optimal exploratory variance is $\operatorname{Var}(\bm\pi^*) = \frac{\lambda(t,x)}{\gamma \sigma^2(t,x)}$.

Plugging the above into \eqref{eq:hjb 2} and
canceling the term $w^{1-\gamma}$, we have that $v$ satisfies the following nonlinear PDE:
\begin{equation}
	\label{eq:hjb 3}
	\begin{aligned}
		\frac{\partial v}{\partial t} & +  (1-\gamma)r  v + m(t,x)v_x + \frac{1}{2}\nu^2(t,x)v_{xx} + \frac{(1-\gamma)\lambda(t,x) v}{2}\log \frac{2 \pi \lambda(t,x)}{\gamma\sigma^2(t,x)}  \\
		& + \frac{1-\gamma}{2\gamma}\Big[ \frac{\big( \mu(t,x) - r \big)^2}{\sigma^2(t,x)} v + \frac{2\rho\big( \mu(t,x)-r \big)\nu(t,x)}{\sigma(t,x)}v_x  + \rho^2\nu^2(t,x)\frac{v_x^2}{v}\Big]  = 0 ,
	\end{aligned}
\end{equation}
with the terminal condition $v(T,x) = 1$.

%Now consider the case when $\gamma = 1$.  Recall that $\frac{w^{1-\gamma}-1}{1-\gamma} \to \log w$. Consider all $\log w$ terms in \eqref{eq:hjb 2},\footnote{{\bf The argument here is problematic. \eqref{eq:hjb 2} is not well defined when $\gamma = 1$. Suggest removing this case throughout the paper.}} we can easily obtain that $v(t,x)$ satisfies that
%\[ v_t + m(t,x)v_x + \frac{1}{2}\nu^2(t,x)v_{xx} = \beta v, \]
%which is consistent with \eqref{eq:hjb 3}. And we can see the solution is $v(t,x) = e^{-\beta(T-t)}. $In this case, $g(t,x)$ satisfies
%\begin{equation}
%\label{eq:eq for g log}
%\begin{aligned}
%& g_t + m(t,x)g_x + \frac{1}{2}\nu^2(t,x)g_{xx}+  \frac{\lambda(t,x)1}{2}\log\big( 2\bm\pi \frac{\lambda(t,x)1}{\gamma\sigma^2(t,x)v} \big) \\
%& + rv + \frac{\gamma\sigma^2(t,x)v}{2}\big( \frac{\mu(t,x)-r}{\gamma\sigma^2(t,x)} + \frac{\rho\nu(t,x)v_x}{\gamma\sigma(t,x)v} \big)^2 = \beta g
%\end{aligned}
%\end{equation}
%and the associated optimal variance is $\frac{\lambda(t,x)\kappa(t,x)}{\gamma\sigma^2(t,x)v}$. Recall that we choose $\kappa(t,x) = e^{-\beta(T-t)}$, it turns out that $v=\kappa$ which is consistent with our ansatz.

Letting  $u$ be the solution to \eqref{eq:hjb 4}, we can directly verify that $v(t,x) = e^{u(t,x)} >0$ solves \eqref{eq:hjb 3}. The desired expression \eqref{eq:optimal exploratory policy via u} now follows from \eqref{mv}.

Next we prove that the policy with \eqref{eq:optimal exploratory policy via u} is optimal and $V(t,w,x) = \frac{w^{1-\gamma}e^{u(t,x)} - 1}{1-\gamma}$ is the optimal value function.
%Consider the following BSDE:
%\begin{equation}
%	\label{eq:general lambda optimal bsde}
%	\left\{  \begin{aligned}
	%		\dd Y_t^{*} = & -\Bigg\{ (1-\gamma)\Big[ \frac{\lambda(t,X_t)}{2}\log\big(\frac{2 \pi \lambda(t,X_t)}{\gamma\sigma_t^2} \big) + r + \frac{(\mu_t - r + \rho\sigma_t Z^{*}_t)^2}{2\gamma\sigma_t^2} \Big]  + \frac{1}{2}{Z_t^{*}}^2\Bigg\} \dd t + Z_t^{*} \dd B_t^X\\
	%		Y^{*}_T = & 0.
	%	\end{aligned}\right. ,
%\end{equation}
%where $\dd B^X_t = \rho\dd B_t + \sqrt{1-\rho^2}\dd \tilde{B}_t$ is the Brownian motion increment associated with the market factor $X$.
%
%Comparing \eqref{eq:general lambda optimal bsde} and \eqref{eq:hjb 4}, we observe that $Y^{*}_t = u(t, X_t)$ solves BSDE \eqref{eq:general lambda optimal bsde}.
%
%
%
%Comparing two BSDEs \eqref{eq:general lambda optimal bsde} and \eqref{eq:bsde value function under pi}.
Denote $K_t^{\bm\pi} = \int_0^t \lambda(s,X_s) (1-\gamma)\mathcal{H}(\bm\pi_s)\dd s$. Apply It\^o's formula to $e^{K_t} V(t,W^{\bm\pi}_t,X_t)$ to get
\begin{equation*}
	\begin{split}
		& \dd\left( e^{K_t^{\bm\pi}} V(t,W^{\bm\pi}_t,X_t) \right) \\
		=&\bigg\{ \frac{\partial V}{\partial t} +\left[ r + (\mu(t,X_t)-r)\operatorname{Mean}(\bm\pi_t)\right] W^{\bm\pi}_t V_w+m(t,X_t) V_x +\frac{1}{2}\sigma^2(t,X_t)\left[ \operatorname{Mean}(\bm\pi_t)^2+\operatorname{Var}(\bm\pi_t)\right] (W^{\bm\pi}_t)^2 V_{ww}\\
		& +\frac{1}{2}\nu^2(t,X_t) V_{xx} + \rho\nu(t,X_t)\sigma(t,X_t)\operatorname{Mean}(\bm\pi_t)W^{\bm\pi}_t  V_w + \lambda(1-\gamma)\mathcal{H}(\bm\pi_t) V \bigg\} e^{K_t^{\bm\pi}} \dd t\\
		&+ \bigg\{ \left[ \sigma(t,X_t)\operatorname{Mean}(\bm\pi_t)W^{\bm\pi}_t V_w+\rho \nu(t,X_t) V_x \right] \dd B_t \\
		&+  \sqrt{1-\rho^2}\nu(t,X_t) V_x \dd\tilde B_t+\sqrt{\operatorname{Var}(\bm\pi_t)} \sigma(t,X_t)W^{\bm\pi}_t V_w \dd \bar B_t\bigg\}e^{K_t^{\bm\pi}} .
	\end{split}
\end{equation*}

Define a sequence of stopping times $\tau_n = \inf\{ t\geq 0: |X_t| \vee (W_t^{\bm\pi})^{1-\gamma} \vee e^{K_t^{\bm\pi}} \geq n  \}$.
Then by \eqref{eq:hjb 1}, we have
\[ \begin{aligned}
	& e^{K_{T\wedge \tau_n}^{\bm\pi}} V(T\wedge \tau_n,W^{\bm\pi}_{T\wedge \tau_n},X_{T\wedge \tau_n}) -  V(0,W^{\bm\pi}_0,X_0) \\
	\leq & \int_0^{T\wedge \tau_n} -\lambda(s,X_s) \mathcal{H}(\bm\pi_s)   e^{K_s^{\bm\pi}} \dd s  + e^{K_s^{\bm\pi}} (W^{\bm\pi}_s)^{1-\gamma} e^{u(s,X_s)}\bigg\{ \left[ \sigma(s,X_s)\operatorname{Mean}(\bm\pi_s) +\rho \nu(s,X_s) \frac{u_x(s,X_s)}{1-\gamma}  \right] \dd B_s \\
	&+  \sqrt{1-\rho^2}\nu(s,X_s) \frac{u_x(s,X_s)}{1-\gamma}  \dd\tilde B_s+\sqrt{\operatorname{Var}(\bm\pi_s)} \sigma(s,X_s) \dd \bar B_s\bigg\} ,
\end{aligned} \]
while ``$=$'' holds when $\bm\pi = \bm\pi^*$ as given in \eqref{eq:optimal exploratory policy via u}.

Recall that $u$ and its derivatives have polynomial growth in $x$; hence we have an estimate about the quadratic variation term above
\[ \begin{aligned}
	& \E\Bigg[\int_0^{T\wedge \tau_n} e^{2K_s^{\bm\pi}} (W^{\bm\pi}_s)^{2(1-\gamma)} e^{2u(s,X_s)}\bigg\{\left[ \sigma(s,X_s)\operatorname{Mean}(\bm\pi_s) +\rho \nu(s,X_s) \frac{u_x(s,X_s)}{1-\gamma}  \right]^2 \\
	&  + (1-\rho^2)\nu^2(s,X_s) \frac{u_x^2(s,X_s)}{(1-\gamma)^2} +  \operatorname{Var}(\bm\pi_s)  \sigma^2(s,X_s) \bigg\} \dd s\Bigg] \\
	\leq & \E\left[\int_0^{T} C_n \bigg\{ \sigma^2(s,X_s) \left[ \operatorname{Mean}(\bm\pi_s)^2 +  \operatorname{Var}(\bm\pi_s) \right]  + 1 \bigg\} \dd s\right]
\end{aligned}  \]
for some constants $C_n>0$.
Therefore, the expectations of the corresponding stochastic integrals are all 0. When $\bm\pi$ is admissible, by the condition ($ii$) in Definition \ref{def:admissible open loop}, we deduce
\[\begin{aligned}
	& V(0, w_0, x_0) \\
	\geq & \E\left[ e^{K_{T\wedge \tau_n}^{\bm\pi}} V(T\wedge \tau_n,W^{\bm\pi}_{T\wedge \tau_n},X_{T\wedge \tau_n})   +\int_0^{T\wedge \tau_n} \lambda(s,X_s) \mathcal{H}(\bm\pi_s)   e^{K_s^{\bm\pi}} \dd s\right] \\
	= & \E\left[ e^{K_{T\wedge \tau_n}^{\bm\pi}} \frac{(W^{\bm\pi}_{T\wedge \tau_n})^{1-\gamma}}{1-\gamma} e^{u(T\wedge \tau_n, X_{T\wedge \tau_n})} - \frac{  e^{K_{T\wedge \tau_n}^{\bm\pi}} }{1-\gamma}   +\int_0^{T\wedge \tau_n} \lambda(s,X_s) \mathcal{H}(\bm\pi_s)   e^{K_s^{\bm\pi}} \dd s\right] \\
	= &  \E\left[ e^{K_{T}^{\bm\pi}} \frac{(W^{\bm\pi}_{T})^{1-\gamma}}{1-\gamma} \one_{\{ \tau_n > T \}} + e^{K_{\tau_n}^{\bm\pi}} \frac{(W^{\bm\pi}_{ \tau_n})^{1-\gamma}}{1-\gamma} e^{u(\tau_n, X_{\tau_n})} \one_{\{ \tau_n \leq T \}} - \frac{  e^{K_{T\wedge \tau_n}^{\bm\pi}} }{1-\gamma}   +\int_0^{T\wedge \tau_n} \lambda(s,X_s) \mathcal{H}(\bm\pi_s)   e^{K_s^{\bm\pi}} \dd s\right].
\end{aligned}  \]
By the condition ($iii$) in Definition \ref{def:admissible open loop}, we have
\[\begin{aligned}
	& 	\E\left[\int_0^T \lambda(t,X_t) |\mathcal{H}(\bm\pi_t)|   e^{K_t^{\bm\pi}} \dd t \right] \leq  \E\left[\int_0^T \lambda(t,X_t) |\mathcal{H}(\bm\pi_t)|   e^{\int_0^t \lambda |1-\gamma| |\mathcal{H}(\bm\pi_s) |\dd s} \dd t \right]\\
	\leq & \E\left[ \int_0^T \lambda(t,X_t) |\mathcal{H}(\bm\pi_t)|\dd t  e^{\int_0^T \lambda |1-\gamma| |\mathcal{H}(\bm\pi_s) |\dd s} \right] \\
	\leq & \left( \E\left[ \left( \int_0^T \lambda(t,X_t) |\mathcal{H}(\bm\pi_t)| \dd t \right)^2 \right]  \right)^{1/2} \left( \E\left[ e^{\int_0^T 2\lambda |1-\gamma| |\mathcal{H}(\bm\pi_s) |\dd s} \right] \right)^{1/2} < \infty,
\end{aligned}   \]
and
\[ \E\left[ e^{K^{\bm\pi}_T} (W^{\bm\pi}_{T})^{1-\gamma} \right] \leq \left( \E\left[ e^{\int_0^T 2\lambda |1-\gamma| |\mathcal{H}(\bm\pi_s) |\dd s} \right] \right)^{1/2} \left( \E\left[  (W^{\bm\pi}_{T})^{2-2\gamma} \right] \right)^{1/2} < \infty .\]

Taking $\limsup_{n\to \infty}$, we conclude from the dominated convergence theorem and Fatou's lemma that
\[ V(0,w_0,x_0) \geq \E\left[ e^{K_T^{\bm\pi}} U(W_T^{\bm\pi}) + \int_0^T \lambda(s,X_s)\mathcal{H}(\bm\pi_s) e^{K_s^{\bm\pi}}\dd s  \right] = J_0^{\bm\pi} . \]

To show ``$=$'' for $\bm\pi^*$, we can directly verify that $V_t =V(t,W_t^{\bm\pi^*},X_t)$ satisfies the BSDE \eqref{eq:bsde recursive general lambda}, and hence $V_t$ is the recursive utility process under $\bm\pi^*$. This completes our proof.
\qed
\endproof

\bigskip

Some remarks are in order. First,
the optimal randomized policy follows {\it Gaussian} with its mean and variance explicitly given via the PDE \eqref{eq:hjb 4}, even if the current setup is neither LQ nor mean-variance. The policy variance depends only on the exogenously given primary temperature function $\lambda$, the risk aversion parameter $\gamma$, and the instantaneous variance function $\sigma^2$. A larger primary temperature  increases the  level of randomization, while a greater risk aversion or volatility reduces it. These results are mathematically consistent with those in \citet{wang2018exploration} and \cite{ddj2020calibrating}, even though in a different context (i.e., RL), who consider an LQ control and an equilibrium mean-variance criterion, respectively. Second, the mean of the optimal policy consists of two parts: a myopic part $\frac{\mu(t,x)-r}{\gamma\sigma^2(t,x)}$ independent of randomization, and a hedging part represented by
\[ \frac{\rho\nu(t,x)}{\gamma\sigma(t,x)}u_x(t,x) = \frac{\operatorname{Cov}(\dd X_t,\dd \log S_t )}{\operatorname{Var}(\dd \log S_t)}u_x(t,X_t). \]
Note that hedging is needed due to the presence of the factor $X$  even in the classical Merton setting; yet the level of hedging is {\it altered} by the agent randomization because $u$ depends on the choice of $\lambda$ via the PDE \eqref{eq:hjb 4}.
As a result, unlike the previous works (e.g., \citealt{wang2018exploration}, \citealt{wang2019continuous}, and \citealt{ddj2020calibrating}) where the optimal policies depend on randomization only through variance and are thus unbiased, the optimal policies here are generally {\it biased} and the degree of biasedness depends on that of randomization. We will investigate this feature in more detail in the following subsections. Finally,
the resulting weight of randomization in the objective function is $\lambda(t,x)[(1-\gamma)V(t,w,x) + 1]=\lambda(t,x)w^{1-\gamma}e^{u(t,x)} > 0$. When $\lambda(t,x)$ is a constant and $\gamma=1$, $u \equiv 0$ is the solution to \eqref{eq:hjb 4} and, as a consequence, the weight reduces to a constant as in the APU objective  \eqref{objective_functional_no_recursive}.

%Finally, from the optimal condition implied in \eqref{eq:hjb 4}, one sees that  the optimal variance is
%$$
%\operatorname{Var}({\bm\pi}^*)=\frac{\lambda(t,x)(1-\gamma )V+1)}{-\sigma^2w^2V_{ww}}.
%$$
%The denominator captures the sensitivity of the utility to changes in the variance of the randomization. Thus, the equation indicates that investors determine the optimal level of randomization by balancing two factors: one is the reward implied by entropy, and the other is the utility loss caused by randomization.  In general, the denominator depends on the wealth level. However, since the control variable is the investment ratio, it is more reasonable for it to remain stable across different wealth levels. This justifies the use of a utility- or wealth-adjusted temperature parameter.

\subsection{When is the Optimal Randomized  Policy Unbiased?}\label{sec:unbi}

When $\lambda \equiv 0$, the optimal Gaussian distribution in Theorem \ref{thm:hjb solution} degenerates into the Dirac measure concentrating on the mean, $\frac{\mu(t,x) - r}{\gamma \sigma^2(t,x)} +\frac{\rho\nu(t,x)}{\gamma\sigma(t,x)}u_x(t,x)$, where $u$ solves \eqref{eq:hjb 4}  with $\lambda \equiv 0$. This is Merton's strategy for the classical problem (in the incomplete market) without preference for randomization. We refer to the case $\lambda\equiv 0$ as the ``classical case" in the rest of this paper. As we have pointed out, unlike the existing results, the mean of the optimal randomized policy does not generally coincide with that of the classical counterpart due to an interaction between randomization and hedging.\footnote{One should, however, note that  biased exploratory/randomized policies are common in the RL literature for various reasons. For example, the $\epsilon$-greedy policy  is a convex combination of the (classically) optimal one and a purely random policy; hence it is biased.}

There are, however,  special circumstances even in our setting where the optimal Gaussian policy becomes unbiased.
According to \eqref{eq:optimal exploratory policy via u}, the part that causes biases is the hedging demand, $\frac{\rho\nu(t,x)}{\gamma\sigma(t,x)}u_x(t,x)$.
Hence, if $\nu \equiv  0$ or $\rho=0$ (i.e., the  factor $X$ is deterministic or evolves independently from the stock price), then this part vanishes and the optimal  policy becomes unbiased.  In these cases, changes in the market factor do not affect the stock return or there is no hedging need against  the factor,  and thus a myopic policy irrelevant to our choice of $\lambda$ is dynamically optimal.
Next, note that the only difference between the classical and RPU problems is reflected by the extra term $\frac{(1-\gamma)\lambda(t,x)}{2}\log\frac{2\pi\lambda(t,x)}{\gamma\sigma^2(t,x)}$ in the PDE \eqref{eq:hjb 4}. If $\gamma=1$ (log-utility), then, for any choice of the function $\lambda$, this extra term vanishes, and hence the unbiasedness holds.
More generally, if one chooses $\lambda$ such that
$\lambda(t,x)\log\frac{2\pi\lambda(t,x)}{\gamma\sigma^2(t,x)}$ is  independent of $x$, then,  taking derivative in $x$ on both sides of \eqref{eq:hjb 4} yields that  $u_x$ satisfies the same PDE regardless of whether $\lambda=0$ or not. This in turn implies that the hedging part $\frac{\rho\nu(t,x)}{\gamma\sigma(t,x)}u_x(t,x)$ is independent of $\lambda$, leading to the unbiasedness of the optimal Gaussian policy.

%A question remains from the above analysis: How to make the term $\lambda(t,x)\log\frac{2\pi\lambda(t,x)}{\gamma\sigma^2(t,x)}$ independent of $x$ without knowing the functional form of $\sigma$. When $\sigma$ is unknown yet is known to be independent of $x$ (so the instantaneous volatility rate is deterministic), we can choose $\lambda(t,x)$ to be independent of $x$ so that the above term is independent of $x$. %More generally,  recall that $\sigma^2$ stands for the instantaneous variance of stock return, which  we may use additional market data to approximate (e.g., VIX for S\&P 500 or implied volatility from option prices). Therefore, if such a function approximation is possible, then we can choose the function $\lambda$ accordingly to make the above term independent of $x$, resulting in unbiasedness.

A discussion of the unbiasedness will also be given from the BSDE perspective in Appendix \ref{app:bsde}.

\subsection{An Asymptotic Analysis on $\lambda$}
\label{subsec:asymptotic}
%As discussed earlier, taking $\lambda(t,x) = \lambda$ is a parsimonious choice that may be easy to work with additional market data on volatility. Therefore, from now on, we fix it to be a constant.
%We use randomized actions to train and improve policies. {\it After} the learning procedure, one implements  the learned policies. In this step, a randomized policy is no longer necessary nor suitable; instead, the agent can use   its mean (which is a deterministic policy) for actual execution  and consider the corresponding expected utility of  terminal wealth without entropy. However, because  the mean of the exploratory optimal policy is generally a biased estimate of the optimal policy of the classical model,

The preference for randomization induces a different objective function for the investor and in general makes the optimal policy biased. In particular, randomization increases the uncertainty of the wealth process and thus lowers the bequest utility. It is interesting to investigate the financial losses due to this preference for randomization and quantify the impact of such a bias. We carry out this investigation by an asymptotic analysis on the PDE \eqref{eq:hjb 4} in the small parameter $\lambda$, which is henceforth assumed to be a {\it constant} (instead of a function) $\lambda(t,x) \equiv \lambda$, leading to asymptotic expansions of the optimal policy along with its value function.

We denote by $u^{(0)}$ the solution to \eqref{eq:hjb 4} with $\lambda=0$ and by $V^{(0)}$ the optimal value function for the classical problem (i.e., when $\lambda=0$).
It follows from Theorem \ref{thm:hjb solution} that  $V^{(0)}(t,w,x) = \frac{w^{1-\gamma}e^{u^{(0)}(t,x)} - 1}{1-\gamma}$. For any $\lambda>0$, let $\bm\pi^*$ be the  optimal randomized policy, and $V^{(\lambda)}$ be the value function of the original {\it non-randomized} problem under the deterministic policy that is the mean of $\bm\pi^*$:
\begin{equation}
\label{eq:payoff under learned policy}
V^{(\lambda)}(t,w,x) = \E\left[ U\left(W_T^{\widehat{\bm\pi^*}}\right) \Big| W_t^{\widehat{\bm\pi^*}} = w, X_t = x \right],
\end{equation}
where $\widehat{\bm\pi^*}(t,x) = \mathcal{N}\left( \operatorname{Mean}\left(\bm\pi^*(t,x)\right), 0\right)$.

The results of our asymptotic analysis involve several functions, among which
$u^{(1)}$ and $u^{(2)}$ are respectively the solutions of the following PDEs:
\begin{equation}
\label{eq:u1}
\begin{aligned}
& \frac{\partial u^{(1)} }{\partial t}  + m(t,x)u^{(1)}_x + \frac{1}{2}\nu^2(t,x)u^{(1)}_{xx} + \frac{1-\gamma}{2}\log\frac{2 \pi}{\gamma\sigma^2(t,x)} \\
& + (1-\gamma)\rho\nu(t,x)\sigma(t,x)\frac{\mu(t,x) - r + \rho\nu(t,x)\sigma(t,x)u_x^{(0)}}{\gamma\sigma^2(t,x)}u_x^{(1)} = 0,\ u^{(1)}(T,x)  = 0,
\end{aligned}\end{equation}
and
\begin{equation}
\label{eq:u2}
\begin{aligned}
& \frac{\partial u^{(2)}}{\partial t}  + m(t,x)u^{(2)}_x + \frac{1}{2}\nu^2(t,x)u^{(2)}_{xx} + \frac{1-\gamma}{2\gamma}\rho^2\nu^2(t,x){u_x^{(1)}}^2 \\
& + (1-\gamma)\rho\nu(t,x)\sigma(t,x)\frac{\mu(t,x) - r + \rho\nu(t,x)\sigma(t,x)u_x^{(0)}}{\gamma\sigma^2(t,x)}u_x^{(2)} = 0,\ u^{(2)}(T,x)  = 0.
\end{aligned}
\end{equation}
Moreover,  $\phi^{(2)}$ is the solution to the PDE:
\begin{equation}
\label{eq:phi2}
\begin{aligned}
\frac{\partial \phi^{(2)}}{\partial t} &+ m(t,x)\phi^{(2)}_x + \frac{1}{2}\nu^2(t,x)(\phi^{(2)}_{xx} + 2u^{(0)}_x\phi^{(2)}_x) \\
& + \frac{(1-\gamma)\rho\nu(t,x)}{\gamma\sigma(t,x)}\big[\mu(t,x)-r+\rho\nu(t,x)\sigma(t,x)u^{(0)}_x\big]\phi^{(2)}_x  \\
& -\frac{(1-\gamma)\rho^2\nu^2(t,x)}{2\gamma}{u^{(1)}_x}^2 = 0,\ \phi^{(2)}(T,x) = 0.
\end{aligned}\end{equation}

\begin{lemma}\label{lemma1}
The solution $u$ of \eqref{eq:hjb 4} admits the following Taylor expansion with respect to $\lambda$:
\begin{equation}
\label{TE}		u(t,x) = u^{(0)}(t,x) + \frac{(1-\gamma)(T-t)}{2}\lambda\log\lambda + \lambda u^{(1)}(t,x) +\lambda^2 u^{(2)}(t,x) +  O(\lambda^3).
\end{equation}
\end{lemma}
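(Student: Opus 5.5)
The expansion is obtained by separating out the singular $\lambda\log\lambda$ contribution explicitly and treating the remainder as a regular perturbation of \eqref{eq:hjb 4}. With $\lambda$ constant, the source term in \eqref{eq:hjb 4} splits as
\[
\frac{(1-\gamma)\lambda}{2}\log\frac{2\pi\lambda}{\gamma\sigma^2}=\frac{(1-\gamma)}{2}\lambda\log\lambda+\frac{(1-\gamma)\lambda}{2}\log\frac{2\pi}{\gamma\sigma^2(t,x)}.
\]
The first piece does not depend on $x$. Set
\[
\tilde u(t,x):=u(t,x)-\frac{(1-\gamma)(T-t)}{2}\lambda\log\lambda .
\]
Then $\partial_t\tilde u$ absorbs the $\lambda\log\lambda$ term exactly, $\tilde u_x=u_x$, $\tilde u_{xx}=u_{xx}$, and $\tilde u(T,\cdot)=0$. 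Hence $\tilde u$ solves \eqref{eq:hjb 4} with the source $\frac{(1-\gamma)\lambda}{2}\log\frac{2\pi\lambda}{\gamma\sigma^2}$ replaced by $\lambda g(t,x)$, where $g:=\frac{1-\gamma}{2}\log\frac{2\pi}{\gamma\sigma^2}$. The dependence on $\lambda$ is now linear in the source, so it remains to show that $\tilde u=u^{(0)}+\lambda u^{(1)}+\lambda^2u^{(2)}+O(\lambda^3)$.

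To identify the coefficients, substitute the ansatz into the reduced equation and match powers of $\lambda$. At order $\lambda^0$ we recover the equation for $u^{(0)}$. The nonlinear terms are quadratic in $u_x$, namely $\frac12\nu^2u_x^2+\frac{1-\gamma}{2\gamma}\rho^2\nu^2u_x^2$. At order $\lambda$ these produce linear terms in $u^{(1)}_x$ with coefficient $u^{(0)}_x$; together with the cross term $\frac{1-\gamma}{\gamma}\rho(\mu-r)\nu/\sigma\, u^{(1)}_x$ and the source $g$, this yields \eqref{eq:u1}. At order $\lambda^2$ we get the same linear operator acting on $u^{(2)}$, plus the quadratic source in $(u^{(1)}_x)^2$, which yields \eqref{eq:u2}. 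I will check that the drift coefficients multiplying $u^{(1)}_x$ and $u^{(2)}_x$ agree with those written in \eqref{eq:u1}–\eqref{eq:u2}, including the $\nu^2u^{(0)}_x$ contribution. This is a term-by-term bookkeeping check; I will reconcile it with the stated form and, if needed, note the convention used there. All three equations are linear parabolic, or Riccati-type for $u^{(0)}$, with zero terminal data, so $u^{(1)}$ and $u^{(2)}$ exist by Feynman–Kac under the standing regularity and growth assumptions.

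For rigor, define the remainder $R:=\tilde u-u^{(0)}-\lambda u^{(1)}-\lambda^2u^{(2)}$. Subtracting the equations gives a semilinear PDE for $R$ of the form
\[
R_t+\mathcal L R+b(t,x)R_x+c\,R_x^2+\lambda^3 f_\lambda(t,x)=0,\qquad R(T,\cdot)=0.
\]
Here $\mathcal L=m\partial_x+\frac12\nu^2\partial_{xx}$, the drift is $b=b_0+O(\lambda)$ and involves $u^{(0)}_x,\lambda u^{(1)}_x,\lambda^2u^{(2)}_x$, and $f_\lambda$ collects the cubic and quartic cross terms, such as $u^{(1)}_xu^{(2)}_x$ and $(u^{(2)}_x)^2$. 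These are bounded by polynomials in $x$, uniformly for small $\lambda$. The quadratic term $cR_x^2$ can be handled by the Cole–Hopf-type transform implicit in Theorem~\ref{thm:hjb solution}, i.e.\ by working with $v=e^{u}$, or by writing $cR_x^2=(cR_x)R_x$ and absorbing it into the drift. Either way we obtain a linear equation for $R$ with source $O(\lambda^3)$. A Feynman–Kac representation, together with moment bounds on the factor process under the modified drift, then gives $|R|\le C\lambda^3(1+|x|^k)$, which is the $O(\lambda^3)$ claim in \eqref{TE}.

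The main obstacle is this last step: the uniform-in-$\lambda$ control of the remainder. It requires a priori polynomial-growth bounds on $u_x$ that hold uniformly for small $\lambda$, so that absorbing $R_x^2$ into the drift keeps the factor SDE well behaved. I would first try to obtain these bounds from the polynomial-growth hypothesis in Theorem~\ref{thm:hjb solution}, assumed to hold uniformly for $\lambda$ in a neighbourhood of $0$. Failing that, I would work at the level of $v=e^{u}$, where the equation \eqref{eq:hjb 3} is closer to linear and comparison arguments are more readily available.
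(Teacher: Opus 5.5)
Your route is the paper's. You remove the $x$-independent piece $\frac{1-\gamma}{2}\lambda\log\lambda$ via the $(T-t)$ term, which is legitimate because \eqref{eq:hjb 4} sees $u$ only through its derivatives. You then substitute and match the $\lambda$ and $\lambda^2$ coefficients. The paper's proof is exactly this formal matching and contains no remainder estimate.

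\textbf{The gap: the matching does not give the printed equations.} The bookkeeping check you postponed does not close, and no convention will reconcile it. After your shift, \eqref{eq:hjb 4} reads $\tilde u_t+(m+b)\tilde u_x+\frac{1}{2}\nu^2\tilde u_{xx}+\frac{\kappa}{2}\nu^2\tilde u_x^2+h_0+\lambda g=0$, with $\kappa=1+(1-\gamma)\rho^2/\gamma>0$, $b=(1-\gamma)\rho\nu(\mu-r)/(\gamma\sigma)$, $h_0=(1-\gamma)r+(1-\gamma)(\mu-r)^2/(2\gamma\sigma^2)$ and $g=\frac{1-\gamma}{2}\log\frac{2\pi}{\gamma\sigma^2}$. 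Matching powers of $\lambda$ then gives
\[
u^{(1)}_t+(m+b+\kappa\nu^2u^{(0)}_x)u^{(1)}_x+\frac{1}{2}\nu^2u^{(1)}_{xx}+g=0,
\]
\[
u^{(2)}_t+(m+b+\kappa\nu^2u^{(0)}_x)u^{(2)}_x+\frac{1}{2}\nu^2u^{(2)}_{xx}+\frac{\kappa}{2}\nu^2(u^{(1)}_x)^2=0,
\]
both with zero terminal data. In \eqref{eq:u1}--\eqref{eq:u2}, however, the drift is only $m+b+(\kappa-1)\nu^2u^{(0)}_x$, and the source in \eqref{eq:u2} is only $\frac{\kappa-1}{2}\nu^2(u^{(1)}_x)^2$. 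So the terms $\nu^2u^{(0)}_xu^{(1)}_x$, $\nu^2u^{(0)}_xu^{(2)}_x$ and $\frac{1}{2}\nu^2(u^{(1)}_x)^2$ coming from $\frac{1}{2}\nu^2u_x^2$ are missing. By contrast, \eqref{eq:phi2} does keep the analogous $\nu^2u^{(0)}_x\phi^{(2)}_x$. The paper's ``we easily derive'' passes over the same discrepancy, so this is not a misreading on your part. Still, you should state and prove the lemma with the corrected $u^{(1)},u^{(2)}$ rather than assert that matching yields the printed ones. With the printed coefficients, the residual in your $R$-equation is $O(\lambda)$, not $O(\lambda^3)$, outside degenerate cases such as $\nu\equiv0$.

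\textbf{The remainder.} The obstacle you name is real. Absorbing $cR_x^2$ into the drift is circular without $\lambda$-uniform a priori bounds on $R_x$. Moreover, $v=e^u$ does not linearize \eqref{eq:hjb 3}: the term $\frac{1-\gamma}{2\gamma}\rho^2\nu^2v_x^2/v$ survives unless $\rho=0$ or $\gamma=1$. The exact linearization uses the exponent $\kappa$: $w=e^{\kappa\tilde u}$ solves $w_t+(m+b)w_x+\frac{1}{2}\nu^2w_{xx}+\kappa(h_0+\lambda g)w=0$, $w(T,\cdot)=1$. Let $Y$ be the diffusion with drift $m+b$ and volatility $\nu$ started at $(t,x)$, and let $\tilde{\mathbb Q}$ have density proportional to $e^{\kappa\int_t^T h_0(s,Y_s)\dd s}$. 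Feynman--Kac then gives
\[
\tilde u(t,x)=u^{(0)}(t,x)+\frac{1}{\kappa}\log\E^{\tilde{\mathbb Q}}\Big[\exp\Big(\lambda\kappa\int_t^T g(s,Y_s)\dd s\Big)\Big],
\]
which is a cumulant generating function in $\lambda$. Suppose $\int_t^T g\,\dd s$ is exponentially integrable, which is automatic if $\sigma$ is bounded and bounded away from zero. Then this function is smooth at $\lambda=0$, and the $O(\lambda^3)$ remainder follows without any estimate on $R_x$. It also identifies $u^{(1)}=\E^{\tilde{\mathbb Q}}[\int_t^T g\,\dd s]$ and $u^{(2)}=\frac{\kappa}{2}\operatorname{Var}^{\tilde{\mathbb Q}}[\int_t^T g\,\dd s]\ge0$. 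This independently confirms the corrected equations. The printed \eqref{eq:u2}, whose source $\frac{1-\gamma}{2\gamma}\rho^2\nu^2(u^{(1)}_x)^2$ is nonpositive for $\gamma>1$, would instead force $u^{(2)}\le0$.
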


\proof{Proof.}
The expansion \eqref{TE} is motivated by observing that the $\lambda$ terms in \eqref{eq:hjb 4} are $\frac{1-\gamma}{2}\lambda\log\lambda$ plus $O(\lambda)$.
Plugging $u$ with the expansion into \eqref{thm:hjb solution}, using the equation of $u^{(0)}$ and equating the $\lambda$ and $\lambda^2$ terms in the resulting equation, we easily derive
the equations \eqref{eq:u1} and \eqref{eq:u2}.
\qed
\endproof
\bigskip

The policy that coincides with the mean of the optimal randomized policy is sub-optimal  for the classical Merton problem due to randomization, which can be considered as a loss in the initial wealth.
For each fixed initial time-factor pair $(t,x)$, we define the {\it equivalent relative wealth loss} to be $\Delta=\Delta(t,x)$ such that the investor is indifferent between obtaining the optimal value without the preference for randomization with initial endowment $w(1-\Delta)$ and getting the value of the mean policy, $\operatorname{Mean}(\bm\pi^*)$, with initial endowment $w$. That is, $\Delta$ is such that  $V^{(0)}\big( t,w(1-\Delta),x\big) = V^{(\lambda)}(t,w,x)$. The equivalent relative wealth loss $\Delta$ measures the relative cost that the agent is willing to pay for the pleasure of randomization. 

The following theorem quantifies the bias of the  optimal randomized policy, the relative loss of utility, and the equivalent relative wealth loss, all in terms of $\lambda$ asymptotically.

\begin{theorem}
\label{thm:expansion}
The asymptotic expansion of the mean of the optimal randomized  policy is
$$
\operatorname{Mean}(\bm\pi^*(t,x)) = a^*(t,x) + \lambda\frac{\rho\nu(t,x)}{\gamma\sigma(t,x)}u_x^{(1)}(t,x)+O(\lambda^2),
$$
where $a^*$ is the optimal policy for the classical case. Moreover, we have $$V^{(\lambda)}(t,w,x) = \frac{w^{1-\gamma}}{1-\gamma}\exp\left( u^{(0)}(t,x) + \lambda^2 \phi^{(2)}(t,x) + O(\lambda^3) \right) - \frac{1}{1-\gamma},$$
along with the relative utility loss
\[ \Big|\frac{V^{(\lambda)}(t,w,x)-V^{(0)}(t,w,x) }{V^{(0)}(t,w,x)}\Big| =\lambda^2 |\phi^{(2)}(t,x)|\cdot\Big|1 + \frac{1}{(1-\gamma)V^{(0)}(t,w,x)}\Big|+ O(\lambda^3). \]
Finally, the equivalent relative wealth loss is
\[ \Delta(t,x)
%= 1 - \exp\left\{ \big( \lambda^2 \phi^{(2)}(t,x) + O(\lambda^3) \big)/(1-\gamma)\right\}
= -\frac{\lambda^2 \phi^{(2)}(t,x)}{1-\gamma} + O(\lambda^3). \]
\end{theorem}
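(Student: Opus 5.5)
The plan is to differentiate the expansion of Lemma \ref{lemma1} in $x$ for the mean, and then to write $V^{(\lambda)}$ through a multiplicative ansatz whose exponent is expanded in $\lambda$. Since the $\lambda\log\lambda$ term in \eqref{TE} does not depend on $x$, we get $u_x = u^{(0)}_x + \lambda u^{(1)}_x + O(\lambda^2)$. Substituting this into \eqref{eq:optimal exploratory policy via u}, and noting that Theorem \ref{thm:hjb solution} with $\lambda=0$ gives $a^*=\frac{\mu-r}{\gamma\sigma^2}+\frac{\rho\nu}{\gamma\sigma}u^{(0)}_x$, yields the first claim directly. (This requires that the Taylor remainder in \eqref{TE} be differentiable in $x$, which I take as part of the content of Lemma \ref{lemma1}.)

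For $V^{(\lambda)}$, I use the deterministic feedback $\hat a(t,x)=\operatorname{Mean}(\bm\pi^*(t,x))$, which depends only on $(t,x)$. By Feynman--Kac, $V^{(\lambda)}$ satisfies the linear PDE \eqref{eq:pde for j pi} with zero variance and no entropy term. Homogeneity then suggests the ansatz $V^{(\lambda)}=\frac{w^{1-\gamma}e^{\phi(t,x)}-1}{1-\gamma}$. Substituting it and dividing by $w^{1-\gamma}$ gives
\[
\phi_t+m\phi_x+\tfrac12\nu^2(\phi_{xx}+\phi_x^2)+(1-\gamma)\Big[r+(\mu-r)\hat a-\tfrac{\gamma}{2}\sigma^2\hat a^2+\rho\nu\sigma\hat a\,\phi_x\Big]=0,\qquad \phi(T,x)=0 .
\]
The bracket is a concave quadratic in $\hat a$, maximized at $\hat a=\frac{\mu-r+\rho\nu\sigma\phi_x}{\gamma\sigma^2}$. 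I then expand $\phi=u^{(0)}+\lambda\phi^{(1)}+\lambda^2\phi^{(2)}+O(\lambda^3)$ and $\hat a=a^*+\lambda\delta+O(\lambda^2)$ with $\delta=\frac{\rho\nu}{\gamma\sigma}u^{(1)}_x$. At order $0$ we recover the equation for $u^{(0)}$. At order $\lambda$, the term linear in $\delta$ drops out: its coefficient is $(1-\gamma)[\mu-r-\gamma\sigma^2a^*+\rho\nu\sigma u^{(0)}_x]=0$ by the first-order condition. So $\phi^{(1)}$ solves a homogeneous linear PDE with zero terminal data, and $\phi^{(1)}\equiv0$.

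At order $\lambda^2$, the quadratic term in $\delta$ contributes $-(1-\gamma)\frac{\gamma}{2}\sigma^2\delta^2=-\frac{(1-\gamma)\rho^2\nu^2}{2\gamma}(u^{(1)}_x)^2$. The remaining terms are the linear transport terms in $\phi^{(2)}_x$, namely $\nu^2u^{(0)}_x\phi^{(2)}_x+(1-\gamma)\rho\nu\sigma a^*\phi^{(2)}_x$, together with cross terms $\delta\,\phi^{(1)}_x$, which vanish because $\phi^{(1)}\equiv0$. Since $\sigma a^*=\frac{\mu-r+\rho\nu\sigma u^{(0)}_x}{\gamma\sigma}$, this is exactly \eqref{eq:phi2}. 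This formal computation gives the $V^{(\lambda)}$ formula. To make it rigorous, I would look at the remainder $R=\phi-u^{(0)}-\lambda^2\phi^{(2)}$, which satisfies a semilinear PDE with source $O(\lambda^3)$ and zero terminal data, and apply a Feynman--Kac/comparison bound, using the polynomial growth assumptions, to get $R=O(\lambda^3)$. I expect this remainder control to be the main technical obstacle, particularly the quadratic gradient term $\phi_x^2$. I would handle it by a Cole--Hopf-type linearization on $e^{\phi}$ or by treating $(\phi_x+u^{(0)}_x)$ as a bounded drift.

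The remaining two statements are algebra. The ratio is
\[
\frac{V^{(\lambda)}-V^{(0)}}{V^{(0)}}=\frac{w^{1-\gamma}e^{u^{(0)}}\big(e^{\lambda^2\phi^{(2)}+O(\lambda^3)}-1\big)}{(1-\gamma)V^{(0)}},
\]
and $w^{1-\gamma}e^{u^{(0)}}=(1-\gamma)V^{(0)}+1$. Expanding the exponential then gives $\lambda^2\phi^{(2)}\big(1+\frac{1}{(1-\gamma)V^{(0)}}\big)+O(\lambda^3)$. For $\Delta$, setting $V^{(0)}(t,w(1-\Delta),x)=V^{(\lambda)}(t,w,x)$ gives $(1-\Delta)^{1-\gamma}=e^{\lambda^2\phi^{(2)}+O(\lambda^3)}$. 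Hence $\Delta=1-\exp\{(\lambda^2\phi^{(2)}+O(\lambda^3))/(1-\gamma)\}=-\frac{\lambda^2\phi^{(2)}}{1-\gamma}+O(\lambda^3)$.
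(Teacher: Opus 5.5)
Your proposal is correct and follows essentially the same route as the paper. Both read off the mean from Lemma~\ref{lemma1}, reduce $V^{(\lambda)}$ via the homothetic ansatz to a PDE for the exponent, expand it as $u^{(0)}+\lambda\phi^{(1)}+\lambda^2\phi^{(2)}$, and deduce $\phi^{(1)}\equiv 0$ and \eqref{eq:phi2}. Beyond the paper, you make explicit the first-order-condition cancellation that removes the $O(\lambda)$ policy perturbation, and you carry out the final algebra for the relative utility loss and $\Delta$, which the paper leaves implicit.
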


\proof{Proof.}
Theorem \ref{thm:hjb solution} along with Lemma \ref{lemma1} imply that the mean of the optimal randomized  policy is expanded as
$$
\begin{aligned}
	a^{(\lambda)}(t,x):=&\frac{\mu(t,x)-r}{\gamma\sigma^2(t,x)} + \frac{\rho\nu(t,x)}{\gamma\sigma(t,x)}\big[ u^{(0)}_x(t,x) + \lambda u^{(1)}_x(t,x) + \lambda^2u^{(2)}_x(t,x) + O(\lambda^3) \big]\\
	= &a^*(t,x) + \lambda\frac{\rho\nu(t,x)}{\gamma\sigma(t,x)}u_x^{(1)}(t,x) +\lambda^2\frac{\rho\nu(t,x)}{\gamma\sigma(t,x)}u_x^{(2)}(t,x)+ O(\lambda^3),
\end{aligned}
$$
where $a^*(t,x) = \frac{\mu(t,x)-r}{\gamma\sigma^2(t,x)} + \frac{\rho\nu(t,x)}{\gamma\sigma(t,x)} u^{(0)}_x(t,x)$ is the optimal policy for  the classical case. %That is, compared with the optimal policy in the classical case, the mean of optimal exploratory policy is biased and differs by $O(\lambda)$ asymptotically, even though the value function differs by $O(\lambda\log\lambda)$. We have better a approximation of policy than the value function in the exploratory framework because the leading error in the value function is mainly caused by the reward from the entropy regularizer, which is expected and independent of the market state. Therefore the extra hedging demand only stays at the level of $O(\lambda)$, which is caused by the entropy of state-dependent exploration.

%To see this, suppose we apply action $a^{\lambda}(t,x) = a^*(t,x) + \lambda\frac{\rho\nu(t,x)}{\gamma\sigma(t,x)}u_x^{(1)}(t,x) +\lambda^2\frac{\rho\nu(t,x)}{\gamma\sigma(t,x)}u_x^{(2)}(t,x) + O(\lambda^3)$ deterministically and we are interested in the expectation of terminal utility
Recall that $V^{(\lambda)}$ is the value function of the classical problem under the deterministic policy $a^{(\lambda)}$. By the Feynman-Kac formula, $V^{(\lambda)}$ satisfies the PDE:
\begin{equation}
	\label{eq:value function under exploratory mean}
	\begin{aligned}
		V^{(\lambda)}_t + & \Big( r + \big(\mu(t,x) - r\big)a^{\lambda}(t,x) \Big)wV^{(\lambda)}_w + \frac{1}{2}\sigma^2(t,x){a^{\lambda}}^2(t,x)w^2V^{(\lambda)}_{ww} \\
		& + m(t,x) V^{(\lambda)}_x + \frac{1}{2}\nu^2(t,x)V^{(\lambda)}_{xx} + \rho\nu(t,x)\sigma(t,x)a^{(\lambda)}(t,x) wV^{(\lambda)}_{wx}  = \beta V^{(\lambda)},
	\end{aligned}
\end{equation}
with the terminal condition $V^{(\lambda)}(T,w,x) =  U(w) = \frac{w^{1-\gamma} - 1}{1-\gamma}$. Conjecturing $V^{(\lambda)}(t,w,x) = \frac{w^{1-\gamma}-1}{1-\gamma}\exp\{\psi^{(\lambda)}(t,x)\} + \varphi^{(\lambda)}(t,x)$ and putting it to \eqref{eq:value function under exploratory mean}, we obtain $-\frac{1}{1-\gamma}\exp\{\psi^{(\lambda)}(t,x)\} + \varphi^{(\lambda)}(t,x) = -\frac{e^{-\beta(T-t)}}{1-\gamma}$, where $\psi^{(\lambda)}$ satisfies
\[
\begin{aligned}
	\psi^{(\lambda)}_t & + m(t,x)\psi^{(\lambda)}_x + \frac{1}{2}\nu^2(t,x)(\psi^{(\lambda)}_{xx} + {\psi^{(\lambda)}_x}^2) + (1-\gamma)\rho\nu(t,x)\sigma(t,x)a^{(\lambda)}(t,x)\psi^{(\lambda)}_x \\
	& + [(1-\gamma)r - \beta] + (1-\gamma)\big[ (\mu(t,x) - r)a^{(\lambda)}(t,x) -\frac{\gamma}{2}\sigma^2(t,x){a^{(\lambda)}}^2(t,x)\big] = 0,\ \psi^{(\lambda)}(T,x) = 0.
\end{aligned}\]

However,  when $\lambda = 0$, $\psi^{(0)} = u^{(0)}$, which motivates us to  expand $\psi^{(\lambda)} = u^{(0)} + \lambda \phi^{(1)} + \lambda^2 \phi^{(2)} + O(\lambda^3)$. Substituting this to the above equation, we deduce that $\phi^{(1)}$ satisfies
\[
\begin{aligned}
	\phi^{(1)}_t & + m(t,x)\phi^{(1)}_x + \frac{1}{2}\nu^2(t,x)(\phi^{(1)}_{xx} + 2u^{(0)}_x\phi^{(1)}_x) \\
	& + \frac{(1-\gamma)\rho\nu(t,x)}{\gamma\sigma(t,x)}\big[\mu(t,x)-r+\rho\nu(t,x)\sigma(t,x)u^{(0)}_x\big]\phi^{(1)}_x = 0,\ \phi^{(1)}(T,x) = 0. \end{aligned}\]
Because $\phi^{(1)} \equiv  0$ is a solution to this linear PDE, it can be easily checked that \eqref{eq:phi2} is the equation satisfied by $\phi^{(2)}$. The proof is complete.
\qed
\endproof
\bigskip

Even though the policy bias is of order $O(\lambda)$, both the relative utility loss and the equivalent relative wealth loss are of order $O(\lambda^2)$. So 
financial and utility losses due to preference for randomization are of higher order of the policy deviation.  

\section{Conclusions}
\label{sec:concl}

This paper aims to address the prevalent appetite  for randomization in a dynamic setting of Merton's problem. We introduce the RPU with entropy functions to represent the preference for stochastic choices, and prove that the optimal policy is Gaussian in a general Markovian incomplete market with CRRA bequest utility. The mean of the Gaussian policy generally differs from the classical Merton solution due to intertemporal hedging demand. An asymptotic expansion in temperature quantifies the deviation of the optimal mean from the classical benchmark and the associated wealth loss as the financial cost of the preference for randomization.

This work opens the gate  to several directions of interesting future research. For example,  dynamic and recursive preference for randomization calls for more solid micro-economics  underpinning. This includes, among others, extending the current framework to include consumption which would allow a reexamination of RPU, and conducting empirical analysis that could help quantify the preference for randomization and assess whether RPU can better explain asset pricing and consumption patterns. It is also interesting to investigate the model-free, RL setting of the current problem, where randomization is out of both necessity (for exploration) and pleasure (for additional utility). In such a setting,  how to distinguish and disentangle these two and how do they interact? Along a different line, for Barberis' model of  optimal exit from  casino gambling \citep{barberis2012model} featuring Kahneman and Tversky's cumulative prospective theory (CPT), \cite{he2017path} and \cite{hu2023casino} show that allowing randomized strategies {\it strictly} improves the optimal CPT value and they attribute this to the non-concavity of the S-shaped utility function in CPT. This suggests that preference for randomization may be implicitly captured by certain non-concave preferences without having to add explicitly a perturbed utility. Technically, it would be valuable to inquire the impact of alternative perturbed functions  such as the Tsallis or R\'enyi entropy and/or alterative bequest utility functions beyond CRRA/CARA, as well as to study RPU for general stochastic control problems.

%Finally, let us remark that this paper focuses on theoretical development without presenting any RL algorithms. However, the theoretical framework and results obtained set the stage for applying immediately the general algorithms developed in \citet{jia2021policy,jia2021pgac,jia2023q}. We leave the details to the interested readers.

\section*{Acknowledgments}
Dai acknowledges the support of Hong Kong GRF (15213422 \& 15217123), The Hong Kong Polytechnic University Research Grants (P0039114, P0042456, P0042708, and P0045342), and the National Natural Science Foundation of China (72432005).  Dong is supported by the National Natural Science Foundation of China (12071333 \& 12101458). Jia acknowledges the support of The Chinese University of Hong Kong start-up grant. Zhou is supported by a start-up grant and the Nie Center for Intelligent Asset Management at Columbia University. His work is also part of a Columbia-CityU/HK collaborative project that is supported by the InnoHK Initiative, The Government of the HKSAR, and the AIFT Lab.

\bibliographystyle{informs2014}
\bibliography{ref}

\bigskip

\begin{APPENDICES}

	\newpage
	
	\section{Different Temperature Schemes}
	
	In this section we discuss two alternative temperature schemes for weighing the entropy utility, and explain the drawbacks of these formulations compared with our recursive formulation. %We will also show that the recursive formulation works for the CARA utility.
	
	\subsection{Constant Temperature}\label{appendix:what if constant temperature}
	With the exploratory dynamics given in \eqref{controlled_system}, if we were to use a constant temperature as in \cite{wang2018exploration} with the objective function
	\[ \E\left[ \int_t^T \lambda \mathcal{H}(\bm\pi_s)\dd s + U(W_T^{\bm{\bm\pi}}) \Big|W_t^{\bm\pi} = w, X_t = x  \right] ,\]
	then the associated HJB equation would be
	\begin{equation}
		\label{eq:hjb 0 equal}
		\begin{aligned}
			& \frac{\partial V}{\partial t} + \sup_{\bm\pi}\Bigg\{ \Big( r + \big(\mu(t,x) - r\big)\operatorname{Mean}(\bm\pi) \Big)wV_w + \frac{1}{2}\sigma^2(t,x)\Big( \operatorname{Mean}(\bm\pi)^2 +  \operatorname{Var}(\bm\pi) \Big)w^2V_{ww} \\
			& + m(t,x) V_x + \frac{1}{2}\nu^2(t,x)V_{xx} + \rho\nu(t,x)\sigma(t,x)\operatorname{Mean}(\bm\pi) wV_{wx} + \lambda \mathcal{H}(\bm\pi) \Bigg\} = 0,
		\end{aligned}
	\end{equation}
	with the terminal condition $V(T,w,x) =  U(w) = \frac{w^{1-\gamma} - 1}{1-\gamma}$.
	
	Similar to \cite{wang2018exploration}, we can solve the maximization problem on the left-hand side of \eqref{eq:hjb 0 equal} and apply the verification theorem to conclude that the optimal policy is a normal distribution with mean
	\[\frac{\big( \mu(t,x) - r \big)V_w}{-\sigma^2(t,x)wV_{ww}} + \frac{\rho \nu(t,x) V_{wx}}{-\sigma(t,x)w V_{ww}}, \]
	and variance
	\[ \frac{\lambda}{-\sigma^2(t,x)w^2V_{ww}}. \]
	Plugging the above into \eqref{eq:hjb 0 equal}, the equation  becomes
	\begin{equation}
		\label{eq:hjb 1 equal}
		\begin{aligned}
			& \frac{\partial V}{\partial t} + rwV_w + m(t,x)V_x + \frac{1}{2}\nu^2(t,x)V_{xx} + \frac{\lambda}{2}\log 2\pi \\
			& - \frac{\bigg( \big( \mu(t,x) - r \big)V_w + \rho \nu(t,x)\sigma(t,x)V_{wx}   \bigg)^2}{2\sigma^2(t,x)V_{ww}} +\frac{\lambda}{2}\log\frac{\lambda}{-\sigma^2(t,x)w^2V_{ww}}=0.
		\end{aligned}
	\end{equation}
	
	To our best knowledge, this PDE generally admits neither a separable nor a closed-form solution due to a lack of the homothetic property. As a result, analytical forms of the optimal value function and optimal policy are both unavailable, making it hard to carry out further theoretical analysis such as a comparative study and sensitivity analysis.
	
	\subsection{Wealth-Dependent Temperature}\label{sec: wealth-weighted}
	In order to obtain a simpler HJB equation, we could consider a wealth-dependent temperature parameter. For example, for CRRA utility, we could take $\lambda(t,w,x) = \lambda w^{1-\gamma}$ where $\gamma\neq 1$. The derivations of \eqref{eq:hjb 0 equal} and \eqref{eq:hjb 1 equal} are similar  by replacing $\lambda$ with $\lambda w^{1-\gamma}$. With this wealth-dependent weighting scheme, the problem becomes homothetic in wealth  with degree $1-\gamma$; hence the value function admits the form $V(t,w,x) = \frac{w^{1-\gamma}v(t,x)}{1-\gamma} + g(t)$, where $g$ and $v$ satisfy
	\[ \begin{aligned}
		&(1-\gamma)g' - (1-\gamma)g = \frac{\partial v}{\partial t} + mv_x+\frac{\nu^2}{2}v_{xx},\ g(T,x) = -\frac{1}{1-\gamma}, \\
		& \frac{\partial v}{\partial t} + (1-\gamma)r v + mv_x + \frac{\nu^2}{2}v_{xx} + \lambda \frac{1-\gamma}{2}\log\frac{2\pi}{\gamma\sigma^2v} + \frac{(1-\gamma)\big( (\mu-r)v + \rho\nu\sigma v_x \big)^2}{2\gamma\sigma^2v}  , \ v(T,x) = 1.
	\end{aligned}\]
	Consequently, the optimal policy is a normal distribution with mean $\frac{\mu(t,x) - r }{\gamma\sigma^2(t,x)} + \frac{\rho \nu(t,x) v_{x}(t,x)}{\gamma\sigma(t,x)v(t,x)}$ and variance $\frac{\lambda}{\gamma \sigma^2(t,x) v(t,x)}$.
	
	The major difference between this formulation and the recursive one is that there is an extra term $v(t,x)$ in the denominator of the optimal variance of the former, which may result in an arbitrarily large randomization  variance within a finite time period and consequently the non-existence of an optimal policy. For example, in the Black-Scholes case ($m, \nu \equiv 0$), $v$ is independent of $x$ and satisfies an ODE whose solution will reach  $0$ with some choice of the coefficients. Specifically, $v(t) = \varphi(T - t)$, where $\varphi$ satisfies an ODE:
	\begin{equation}
		\label{eq:ansatz ode crra}
		\varphi'(\tau) = (1-\gamma)\Big( (r+\frac{(\mu-r)^2}{2\gamma\sigma^2})\varphi(\tau) + \frac{\lambda}{2}\log\frac{2\pi\lambda}{\gamma\sigma^2} -\frac{\lambda}{2}\log \varphi(\tau) \Big),\ \varphi(0) = 1.
	\end{equation}
	We have the following theorem.
	\begin{theorem}
		\label{thm:utility maximization explore}
		If $\gamma>1$ and $r+\frac{(\mu-r)^2}{2\gamma\sigma^2} + \frac{\lambda}{2}\log\frac{2\pi\lambda}{\gamma\sigma^2}< 0$, then the solution to \eqref{eq:ansatz ode crra} reaches $0$ in a finite time.
	\end{theorem}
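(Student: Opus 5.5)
The plan is a phase-line analysis of the autonomous scalar ODE \eqref{eq:ansatz ode crra}. Put $c:=\gamma-1>0$, $a:=r+\frac{(\mu-r)^2}{2\gamma\sigma^2}$ and $b:=\frac{\lambda}{2}\log\frac{2\pi\lambda}{\gamma\sigma^2}$. Then the ODE reads $\varphi'=-c\,g(\varphi)$ with $g(\varphi):=a\varphi+b-\frac{\lambda}{2}\log\varphi$, $\varphi(0)=1$, and the hypothesis of Theorem \ref{thm:utility maximization explore} is $g(1)=a+b<0$. The right-hand side is locally Lipschitz on $(0,\infty)$, so a unique maximal solution exists. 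The question is whether its maximal interval is finite with $\varphi\downarrow 0$.

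The argument would go in three steps. First, locate the zeros of $g$ on $(0,\infty)$, i.e.\ the equilibria of the ODE. We have $g'(\varphi)=a-\frac{\lambda}{2\varphi}$ and $g(0^+)=+\infty$, so $g$ is strictly convex, with at most two zeros when $a>0$ and exactly one when $a\le 0$. Second, read off the direction of motion from $\varphi(0)=1$ using $\varphi'(0)=-c\,g(1)$. Third, if the solution moves toward $0$ with no equilibrium in $(0,1)$, get finite hitting time from a lower bound $\varphi'\le -c\,\epsilon$ on $(0,1]$ for some $\epsilon>0$. Such a bound would come from $\inf_{(0,1]}g>0$, which the logarithmic term supplies near $0$.

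I expect the second step to be the main obstacle, and I would settle it first before doing anything else. Under the stated hypothesis, $\varphi'(0)=-c\,(a+b)>0$, so the solution initially \emph{increases} from $1$.
\begin{itemize}
\item If $a>0$: for $\varphi\ge 1$ the function $g$ starts negative and tends to $+\infty$, so it has a root $\varphi^*>1$. The solution would then increase monotonically to $\varphi^*$ and stay bounded away from $0$.
\item If $a\le 0$: $g$ is decreasing, so $g<0$ on $[1,\infty)$ and $\varphi$ increases for all time.
\end{itemize}
A direct phase-line argument under the hypothesis exactly as worded therefore does not yield hitting $0$.

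Before going further I would recheck two things: the sign conventions in the derivation of \eqref{eq:ansatz ode crra} (the change of variables $v(t)=\varphi(T-t)$, and the sign of the log term coming from the variance $\frac{\lambda}{\gamma\sigma^2 v}$), and whether ``reaches $0$'' refers to $v$ or to the coefficient $1/v$ of the optimal variance. Only then would I carry out the third step, using the explicit bound $\varphi'\le -c\inf_{(0,1]}g<0$. If these checks confirm the ODE as written, I would report that the statement as worded cannot be proved by this route, because the trajectory from $\varphi(0)=1$ moves away from $0$.
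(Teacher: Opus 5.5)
Your phase-line computation is right, and its conclusion is stronger than you state. It is not a limitation of your route but a counterexample to the statement as printed. With $c=\gamma-1>0$ and $g(1)=a+b<0$ you have $\varphi'(0)=-c\,g(1)>0$. By uniqueness the solution then increases and stays in $[1,\varphi^*)$ (with $\varphi^*>1$ the zero of $g$) if $a>0$, and in $[1,\infty)$ if $a\le 0$. So for every parameter choice meeting the hypothesis, it never approaches $0$.

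Both of your rechecks come out clean. With $V=\frac{w^{1-\gamma}}{1-\gamma}v+g$ and variance $\frac{\lambda}{\gamma\sigma^2 v}$, the term $\frac12\sigma^2\operatorname{Var}\,w^2V_{ww}=-\frac{\lambda}{2}w^{1-\gamma}$ cancels the $e$ in $\frac{\lambda}{2}w^{1-\gamma}\log(2\pi e\operatorname{Var})$. That leaves $(1-\gamma)\frac{\lambda}{2}\log\frac{2\pi\lambda}{\gamma\sigma^2v}$ in the equation for $v$, and $\varphi(\tau)=v(T-\tau)$ then gives \eqref{eq:ansatz ode crra} exactly. Also, ``reaches $0$'' refers to $v$ itself, since that is what makes the optimal variance blow up. The error is in the hypothesis.

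The paper's proof does not detect this. It writes the ODE as $y'=F(y)$ with $F=(1-\gamma)g$, substitutes $y=e^{-z}$, and applies the Osgood test to $z'=-e^{z}F(e^{-z})$. It does so under the standing assumption that $F<0$ on $(0,1]$ (or that $z$ is increasing). That assumption is never derived from the hypothesis, and in fact it contradicts it: $F(1)<0$ means $a+b>0$, while under the stated condition $z'(0)=-F(1)<0$. So the inequality must be reversed.

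More precisely, the correct sufficient condition is $\gamma>1$ together with $g>0$ on $(0,1]$. Since $g'(\varphi)=a-\frac{\lambda}{2\varphi}$, this amounts to $a+b>0$ when $a\le\lambda/2$, and to $b+\frac{\lambda}{2}\bigl(1+\log\frac{2a}{\lambda}\bigr)>0$ when $a>\lambda/2$. Otherwise a zero of $g$ in $(0,1)$ traps the solution. Switching to $\gamma<1$ does not rescue the claim either, because then $\varphi'=(1-\gamma)g(\varphi)\to+\infty$ as $\varphi\downarrow 0$.

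Under the corrected hypothesis, your step three is a complete proof and is more elementary than the paper's. Since $g(0^+)=+\infty$ and $g$ is continuous, $\epsilon=\inf_{(0,1]}g>0$. Hence $\varphi'\le -c\epsilon$, and $0$ is reached by time $1/(c\epsilon)$.

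The substitution and the Osgood test only add an exact formula. The paper's constants are $(1-\gamma)a$, $(1-\gamma)b$, $(1-\gamma)\frac{\lambda}{2}$ in your notation, so its $\tau_e$ equals $\int_0^\infty\frac{\dd z}{c\,(a+be^{z}+\frac{\lambda}{2}ze^{z})}=\int_0^1\frac{\dd\varphi}{c\,g(\varphi)}$. That is the usual hitting-time integral for an autonomous scalar ODE, and your bound is just this integral estimated using $\inf g$.
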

\proof{Proof.}
Consider an ODE $y' = F(y)$ with the initial condition $y(0)=1$, where $F(x) = ax + b -c\log x$ with some constants $a,b$ and $c$. Let  $y(\tau) = e^{-z(\tau)}$. Then $z$ satisfies
\begin{equation}\label{ode:explode}
z' = -e^{z}F(e^{-z})=-a - be^z - cze^{z},\ z(0)=0.
\end{equation}
If $F(x)<0$ for  any $x\in(0,1]$ or $z$ is increasing,  then applying  the Osgood test (c.f., \citealt{ceballos2010generalization}) we conclude that the solution of  \eqref{ode:explode} explodes at time $\tau_e = -\int_0^{\infty} \frac{1}{a+be^{z}+cze^{z}}\dd z$.
The desired result now follows from the fact that $y\to 0$ is equivalent to  $z\to \infty$.
\qed
\endproof

	\section{CARA Utility}\label{sect: cara}
	The RPU also works for the CARA utility, for which we use the dollar amount invested in the risky asset as the control (portfolio) variable, and denote by $\bm\pi$  the corresponding probability-density-valued control. The wealth process under $\bm\pi$ is
	\begin{equation}
		\label{controlled_system cara}
		\dd W^{\bm\pi}_t  = \left[rW^{\bm\pi}_t+(\mu(t,X_t)-r)\operatorname{Mean}(\bm\pi_t)\right]\dd t +\sigma(t,X_t)\left[ \operatorname{Mean}(\bm\pi_t) \dd B_t+\sqrt{ \operatorname{Var}(\bm\pi_t) }\dd \bar{B}_t\right].
	\end{equation}
	Consider the following regularized objective function:
	\begin{equation}
		\label{objective_functional cara}
		J^{\bm\pi}_t:=\mathbb E\left[ \int_t^T  -\lambda J_s^{\bm\pi}\mathcal{H}(\bm\pi_s) \dd s  +  U(W_T^{\bm\pi}) \Big| \f_t \right] ,
	\end{equation}
	where $U(x) = -\frac{1}{\gamma} e^{-\gamma x}$. Under this recursive weighting scheme, the optimal policy is given by
	\[ \bm\pi^*(t,x) = \mathcal{N}\bigg( \big( \frac{\mu(t,x) - r}{\gamma \sigma^2(t,x)} - \frac{\rho\nu(t,x)}{\sigma(t,x)}u_x(t,x)\big) e^{-r(T-t)}, \frac{\lambda}{\gamma^2\sigma^2(t,x)}e^{-2r(T-t)} \bigg) ,\]
	where $u$ satisfies
	\begin{align*}
		& u_t + m(t,x)u_x + \frac{1}{2}\nu^2(t,x)(-\gamma u_x^2 + u_{xx})  + \frac{\gamma\sigma^2(t,x)}{2}\left(  \frac{\mu(t,x) - r}{\gamma \sigma^2(t,x)} - \frac{\rho\nu(t,x)}{\sigma(t,x)}u_x \right)^2 \\
		&+ \frac{\lambda}{2\gamma}\log\frac{2\pi\lambda}{\gamma^2\sigma^2(t,x)e^{2r(T-t)}}  = 0,\ \ \ u(T,x) = 0,
	\end{align*}
	and the associated optimal value function is $V(t,w,x) = -\frac{1}{\gamma}\exp( -\gamma u(t,x) - \gamma e^{r(T-t)}w )$. Hence we can develop a theory  parallel to the CRRA utility. Details are left to the interested readers.
	
	\section{%Choice of $\lambda$ and Relations with Classical Counterpart:
		A BSDE Perspective}\label{app:bsde}
	The optimal portfolio processes with parameter $\lambda$ in the RPU problem can also be characterized by a BSDE.
	\begin{theorem}
		\label{thm:optimal bsde}
		Suppose the following quadratic BSDE admits a unique solution
		\begin{equation}
			\label{eq: optimal bsde}
			\left\{  \begin{aligned}
				\dd Y_t^{*(\lambda)} = & -\Bigg\{ (1-\gamma)\Big[ \frac{\lambda}{2}\log\frac{2 \pi \lambda}{\gamma\sigma_t^2} + r + \frac{(\mu_t - r + \rho\sigma_t Z^{*(\lambda)}_t)^2}{2\gamma\sigma_t^2} \Big]  + \frac{1}{2}{Z_t^{*(\lambda)}}^2\Bigg\} \dd t + Z_t^{*(\lambda)} \dd B_t^X,\\
				Y^{*(\lambda)}_T = & 0.
			\end{aligned}\right.
		\end{equation}
		Then the optimal randomized  control is given by
		\[ \bm\pi^{*(\lambda)}_t = \mathcal{N}\Big( \frac{\mu_t - r + \rho\sigma_t Z^{*(\lambda)}_t}{\gamma\sigma^2_t}, \frac{\lambda}{\gamma \sigma_t^2}  \Big)   . \]
	\end{theorem}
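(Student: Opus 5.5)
The approach is a BSDE analogue of the verification argument in Theorem \ref{thm:hjb solution}, with $Y^{*(\lambda)}$ playing the role of $u(t,X_t)$ and $Z^{*(\lambda)}$ that of $\nu u_x$. Here $B^X_t:=\rho B_t+\sqrt{1-\rho^2}\tilde B_t$. For an arbitrary admissible control $\bm\pi$ I will consider the process
\[ \Gamma^{\bm\pi}_t := e^{K^{\bm\pi}_t}\,\frac{(W^{\bm\pi}_t)^{1-\gamma}e^{Y^{*(\lambda)}_t}-1}{1-\gamma} + \int_0^t \lambda \mathcal H(\bm\pi_s)e^{K^{\bm\pi}_s}\dd s, \qquad K^{\bm\pi}_t=\int_0^t\lambda(1-\gamma)\mathcal H(\bm\pi_s)\dd s. \]
At $t=T$ this equals $e^{K_T}U(W_T^{\bm\pi})+\int_0^T\lambda\mathcal H e^{K_s}\dd s$, whose expectation is $J_0^{\bm\pi}$ by \eqref{OF_expecation}. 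The goal is to show that $\Gamma^{\bm\pi}$ is a local supermartingale for every $\bm\pi$, and a local martingale for $\bm\pi^{*(\lambda)}$.

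\textbf{Step 1 (drift computation).} Set $M_t=(W^{\bm\pi}_t)^{1-\gamma}e^{Y_t}$. Applying Itô's formula with \eqref{controlled_system} and \eqref{eq: optimal bsde} gives $\dd M_t/M_t$ a drift equal to
\[ (1-\gamma)\Big[r+(\mu_t-r)m_t-\tfrac{\gamma}{2}\sigma_t^2(m_t^2+v_t)+\rho\sigma_t m_t Z_t\Big] - (1-\gamma)\Big[\tfrac{\lambda}{2}\log\tfrac{2\pi\lambda}{\gamma\sigma_t^2}+r+\tfrac{(\mu_t-r+\rho\sigma_tZ_t)^2}{2\gamma\sigma_t^2}\Big], \]
where $m_t,v_t$ are the mean and variance of $\bm\pi_t$. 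The $\tfrac12 Z^2$ terms cancel, as in the passage from \eqref{eq:hjb 3} to \eqref{eq:hjb 4}. The $\dd t$ coefficient of $\Gamma$ is then $e^{K_t}\big[M_t\cdot(\text{drift}/(1-\gamma)) + \lambda\mathcal H(\bm\pi_t)\,(M_t-1)\cdot\tfrac{1-\gamma}{1-\gamma}+\lambda\mathcal H\big]$. After simplification it equals $e^{K_t}M_t$ times
\[ \Big[(\mu_t-r+\rho\sigma_tZ_t)m_t-\tfrac{\gamma}{2}\sigma_t^2m_t^2 - \tfrac{(\mu_t-r+\rho\sigma_tZ_t)^2}{2\gamma\sigma_t^2}\Big] + \Big[\lambda\mathcal H(\bm\pi_t)-\tfrac{\gamma}{2}\sigma_t^2v_t-\tfrac{\lambda}{2}\log\tfrac{2\pi\lambda}{\gamma\sigma_t^2}\Big]. \]
The first bracket is $\le 0$ by completing the square, with equality iff $m_t=(\mu_t-r+\rho\sigma_tZ_t)/(\gamma\sigma_t^2)$. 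For the second bracket, $\mathcal H(\bm\pi_t)\le\frac12\log(2\pi e v_t)$, and maximizing over $v_t$ shows it is $\le0$, with equality iff $\bm\pi_t$ is Gaussian with $v_t=\lambda/(\gamma\sigma_t^2)$. Since $M_t>0$, the drift is nonpositive, and it vanishes for $\bm\pi^{*(\lambda)}$. This is a pointwise (in $\omega,t$) version of the two-stage maximization in the proof of Theorem \ref{thm:hjb solution}.

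\textbf{Step 2 (localization and passage to the limit).} I will localize with $\tau_n=\inf\{t:|Y_t|\vee M_t\vee e^{K_t}\vee\int_0^t|Z_s|^2\dd s\ge n\}$ so that the stochastic integrals are true martingales. This gives $\Gamma_0=\frac{e^{Y_0}w_0^{1-\gamma}-1}{1-\gamma}\ge \E[\Gamma_{T\wedge\tau_n}]$. To let $n\to\infty$ I would reuse the integrability estimates from the proof of Theorem \ref{thm:hjb solution}, namely Cauchy--Schwarz with condition (iii) of Definition \ref{def:admissible open loop}, which control $e^{K_T}$, $\int\lambda|\mathcal H|e^{K}$, and $(W_T)^{1-\gamma}$. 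Then $J_0^{\bm\pi}\le \Gamma_0$ for every admissible $\bm\pi$, with equality for $\bm\pi^{*(\lambda)}$ (the same argument can be run from any $t$, conditionally on $\f_t$). This identifies $\bm\pi^{*(\lambda)}$ as optimal, with the recursive utility $\frac{W_t^{1-\gamma}e^{Y_t}-1}{1-\gamma}$. For the equality case I would instead check directly that $J_t=\frac{(W_t^*)^{1-\gamma}e^{Y_t}-1}{1-\gamma}$ solves the linear BSDE \eqref{eq:bsde recursive general lambda}, and conclude by uniqueness \citep[Proposition 2.2]{el1997backward}.

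\textbf{Main obstacle.} The difficulty is the terminal limit: $e^{Y_{\tau_n}}$ is not bounded a priori because $Y$ solves a quadratic BSDE. I would first try to assume, or derive from the unique-solution hypothesis, that $Y$ is bounded (or at least that $e^{Y}$ has enough moments), which makes the terms $e^{K_{\tau_n}}M_{\tau_n}\one_{\{\tau_n\le T\}}$ uniformly integrable. Failing that, I would use Fatou's lemma in the supermartingale direction, taking care of the sign of $1/(1-\gamma)$ separately in the cases $\gamma<1$ and $\gamma>1$. In addition, for the equality case, admissibility of $\bm\pi^{*(\lambda)}$ must be assumed, as in Theorem \ref{thm:hjb solution}.
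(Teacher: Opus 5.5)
Your plan follows the paper's argument. The paper supports this theorem only by noting that $(Y^{*(\lambda)}_t,Z^{*(\lambda)}_t)$ plays the role of $(u(t,X_t),\nu u_x(t,X_t))$ and invoking the martingale optimality principle, i.e.\ the verification in the proof of Theorem~\ref{thm:hjb solution}. Your $\Gamma^{\bm\pi}$ is exactly the process $e^{K^{\bm\pi}_t}V+\int_0^t\lambda\mathcal H(\bm\pi_s)e^{K^{\bm\pi}_s}\dd s$ used there. Your Step~1 is correct: the $\tfrac12 Z^2$ terms cancel, and the factor $1-\gamma$ drops out, so the drift is $e^{K_t}M_t$ times your two nonpositive brackets for either sign of $1-\gamma$. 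Both brackets vanish exactly at the stated Gaussian.

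The step that does not work as you describe is the limit $n\to\infty$ when $\gamma>1$, which the paper calls the typical case.
\begin{itemize}
\item There the term $-e^{K_{T\wedge\tau_n}}M_{T\wedge\tau_n}/(\gamma-1)$ is nonpositive, so Fatou only gives $\E[\Gamma_T]\ge\limsup_n\E[\Gamma_{T\wedge\tau_n}]$. That is the wrong direction.
\item What you actually need is $\E\big[e^{K_{\tau_n}}M_{\tau_n}\one_{\{\tau_n\le T\}}\big]\to 0$ along a subsequence.
\item Boundedness of $Y$ does not deliver this. (It would come from the class in which uniqueness is posed, not from uniqueness itself.) It only reduces the problem to uniform integrability of $\{e^{K_\tau}(W^{\bm\pi}_\tau)^{1-\gamma}\}$ over stopping times.
\item Definition~\ref{def:admissible open loop} does not supply that uniform integrability, because it controls $W^{\bm\pi}$ only at time $T$.
\item For $\gamma>1$, $e^{K}M$ is merely a positive local \emph{sub}martingale: its drift is $(1-\gamma)e^{K}M$ times your nonpositive brackets. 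So strict-local-martingale behaviour, where the stopped expectations do not converge, is not excluded.
\end{itemize}

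You therefore need to add a class-(D) requirement on $\{e^{K_\tau}(W^{\bm\pi}_\tau)^{1-\gamma}\}$ to admissibility, as is standard in the BSDE approach to utility maximization. Your equality case needs an integrability condition of the same kind. It is what makes $e^{K}M^*$ a true martingale, equivalently what puts $(M^*-1)/(1-\gamma)$ in the uniqueness class of the linear BSDE.

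The paper's own verification passes over the same point with a bare appeal to dominated convergence and Fatou. So this is a shared omission rather than a flaw in your strategy. For $\gamma\in(0,1)$, the inequality part of your argument is complete as written.
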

	
	The solution to BSDE \eqref{eq: optimal bsde} corresponds to the PDE \eqref{eq:hjb 4} in Theorem \ref{thm:hjb solution}.
	The process $Y^{*(\lambda)}$ can be interpreted as an auxiliary process in the martingale duality theory, which stipulates that
	\[ \left( V^{\bm\pi}_t + \int_0^t  \lambda \mathcal{H}(\bm\pi_s)\big[(1-\gamma)V^{\bm\pi}_s + 1\big]\dd s \right) e^{Y_t^{*(\lambda)}} \]
	is a supermartingale for any portfolio control $\bm\pi_t$ and a martingale when $\bm\pi_t = \bm\pi^{*(\lambda)}_t$, where $V^{\bm\pi}_t = \frac{{w_t^{\bm\pi}}^{1-\gamma}}{1-\gamma}e^{Y_t^{*(\lambda)}} -  \frac{1}{1-\gamma}$ is the value function evaluated along the optimal wealth  trajectories. Moreover,
	when $\lambda = 0$, the above result reduces to the classical one which was first derived by \cite{hu2005utility}.
	
	The BSDE  \eqref{eq: optimal bsde} offers important insights  about the connection between the classical problem ($\lambda=0$) and the RPU ($\lambda > 0$), some of them consistent with the unbiasedness discussion  in Subsection \ref{sec:unbi}. First, the only difference between the two problems is the extra term, $\frac{(1-\gamma)\lambda}{2}\log\frac{2\pi\lambda}{\gamma\sigma_t^2}$, in the driver of \eqref{eq: optimal bsde}. This term becomes a deterministic function of $t$ when $\gamma=1$ (log-utility) or when $\sigma_t$ is deterministic, in which case $Z^{*(\lambda)} = Z^{*(0)}$.\footnote{This is because, in general,  if $(Y_t,Z_t)$ satisfies a BSDE $\dd Y_t = -f(t,X_t,Z_t)\dd t + Z_t\dd B^X_t$ with $Y_T = F(X_T)$,  then $(Y_t+C(t),Z_t)$ solves the BSDE $\dd \tilde{Y}_t = -\big[c(t)+f(t,X_t,\tilde{Z}_t)\big]\dd t + \tilde{Z}_t\dd B^X_t$ with $\tilde{Y}_T = F(X_T)$, where $c$ is a given deterministic function of $t$ and $C(t) = \int_t^T c(s)\dd s$. Hence $Z=\tilde Z$ follows from the uniqueness of solution.} Hence the optimal solution of the RPU problem has a mean that coincides with the classical solution, or the former is unbiased. Incidentally, this is consistent with an earlier result of \cite{ddj2020calibrating} in the mean-variance analysis for the log utility. Second, when $\rho=0$, i.e., the market factors evolve independently from the stock price, the optimal randomized  policy is also unbiased, even if $Z^{*(\lambda)} \neq Z^{*(0)}$ in general. This is intuitive because  hedging is not necessary in this case.  A special case is when the market factor $X$ is deterministic, in which case the source of randomness $B^X_t$ vanishes leading to  $Z^{*(\lambda)} = Z^{*(0)} = 0$ and hence the optimal randomization is unbiased.
	Otherwise, optimal  policies are in general biased  in stochastic volatility models when $\sigma_t$ is a function of the stochastic factor process $X_t$ and the power of the utility $\gamma \neq 1$.
	%\section{Expressions for the Solutions to ODEs}
	%\label{appendix:expression}
	%In \eqref{eq:classical solution ricatti},
	%\[ \psi_0 = -\frac{\sqrt{\iota^2 \gamma - (1-\gamma) \delta\bar\nu(\delta\nu + 2\iota\rho)}}{\sqrt{\gamma}}, \ \psi_1 = \frac{(1-\gamma)\delta^2}{\bar{\nu}^2[\rho^2 + \gamma(1-\rho^2)]} ,\]
	%\[ \psi_{2,3} = \frac{\iota \gamma - (1-\gamma)\delta\bar\nu\rho \pm \sqrt{\gamma} \sqrt{\iota^2 \gamma - (1-\gamma) \delta\bar\nu(\delta\nu + 2\iota\rho)} }{\bar{\nu}^2[\rho^2 + \gamma(1-\rho^2)]} ,\]
	%\[ \psi_4 = (1-\gamma)r -\beta + \iota\bar x\psi_3,\ \psi_5 = \frac{2\gamma\iota\bar x}{\bar{\nu}^2[\rho^2 + \gamma(1-\rho^2)]}  .\]
	%
	%In \eqref{eq:explode ode},
	%\[ \psi_0 = \frac{\iota \gamma - (1-\gamma)\delta\bar\nu\rho }{\bar{\nu}^2[\rho^2 + \gamma(1-\rho^2)]},\ \psi_1 = \frac{\sqrt{\gamma}\sqrt{-\iota^2 \gamma + (1-\gamma) \delta\bar\nu(\delta\nu + 2\iota\rho)}}{\bar{\nu}^2[\rho^2 + \gamma(1-\rho^2)]} ,\ \psi_2 = \frac{\sqrt{-\iota^2 \gamma + (1-\gamma) \delta\bar\nu(\delta\nu + 2\iota\rho)}}{2\sqrt{\gamma}},\]
	%\[\psi_4 = (1-\gamma)r - \beta -\iota \bar{x} \psi_0,\ \psi_5 = -\frac{2\gamma\iota\bar{x}}{\bar{\nu}^2[\rho^2 + \gamma(1-\rho^2)]}.  \]

\end{APPENDICES}

\end{document}